\title{Distributed Estimation of Graph 4-Profiles\footnote{This work will be presented in part at WWW'16, and has been supported by NSF Grants CCF 1344179, 1344364, 1407278, 1422549 and ARO YIP W911NF-14-1-0258.}}
\author{
     Ethan R. Elenberg, Karthikeyan Shanmugam, \\
     Michael Borokhovich, Alexandros G. Dimakis \\
   	The University of Texas at Austin \\
   	\texttt{\{elenberg,karthiksh,michaelbor\}@utexas.edu} \\
   	\texttt{dimakis@austin.utexas.edu} \\
}
\newcommand{\pnorm}[2]{{\Vert #1 \Vert} _{#2}}
\newtheorem{theorem}{Theorem}
\newtheorem{prop}{Proposition}
\newtheorem{defin}{Definition}
\newtheorem{lem}{Lemma}
\newtheorem{cor}{Corollary}
\newcommand{\vecalpha}{{\boldsymbol \alpha}}
\newcommand{\prof}{\textsc{4-Prof-Dist}}
\newcommand{\orca}{\textsc{Orca}}
\def\smalldist{.05}
\begin{document}

\maketitle

\begin{abstract}
	
We present a novel distributed algorithm for counting all  four-node induced subgraphs in a big graph. 
These counts, called the $4$-profile, describe a graph's connectivity properties and have found several uses ranging from bioinformatics to spam detection. We also study the more complicated problem of estimating the local $4$-profiles centered at each vertex of the graph. 
The local $4$-profile embeds 
every vertex in an $11$-dimensional space that characterizes the local geometry of its neighborhood: vertices that connect different clusters will have different local $4$-profiles compared to those that are only part of one dense cluster.

Our algorithm is a local, distributed message-passing scheme on the graph and computes all the local $4$-profiles in parallel. 
We rely on two novel theoretical contributions: we show that local $4$-profiles can be calculated using compressed two-hop information and also establish novel concentration results that show that graphs can be substantially sparsified and still retain 
good approximation quality for the global $4$-profile. 

We empirically evaluate our algorithm using a distributed GraphLab implementation that we scaled up to $640$ cores. We show that our algorithm can compute global and local $4$-profiles of graphs with millions of edges in a few minutes, significantly improving upon the previous state of the art. 

\end{abstract}

\section{Introduction}

Graph $k$-profiles are local statistics that count the number of small subgraphs in a big graph. $k$-profiles
are a natural generalization of triangle counting and are increasingly popular for several problems in big graph analytics.
Globally, they form a concise graph description that has found several applications for the web~\cite{becchetti08,OCallaghan2012}
as well as social and biological networks~\cite{Ugander2013,przPPIorig}. Furthermore, as we explain, the \textit{local} profile of a vertex is an embedding in a low-dimensional feature space that reveals local structural information. 
Mathematically, $k$-profiles are of significant recent interest since they are connected to the emerging theory of graph homomorphisms, graph limits and graphons~\cite{Borgs2006,Ugander2013,lovasz2012large}.

There are $4$ possible graphs on $3$ vertices, labeled $H_0, \ldots, H_3$, as in Figure \ref{fig:profiles} (left).
The (global) $3$-profile of a graph $G(V,E)$ is a vector having one coordinate for each distinct $H_i$ that counts how many times 
that $H_i$ appears as an induced subgraph of $G$. For example, the graph $G=K_4$ (the complete graph on $4$ vertices) has the $3$-profile $[0,0,0,4]$ since it contains $4$ triangles and no other (induced) subgraphs. The graph $C_5$ (the cycle on $5$ vertices, \textit{i.e.} a pentagon) has the $3$-profile $[0,5,5,0]$. Note that the sum of the $k$-profile is always $\binom{|V|}{k}$, the total number of subgraphs.
Estimating $3$-profiles of big graphs is a topic that has received attention from several communities recently (\textit{e.g.}, see \cite{Ugander2013,Williams2014mod,Hocevar2014bio,ourpaperKDD} and references therein). 

In this paper we are interested in the significantly more challenging problem of estimating $4$-profiles. Figure \ref{fig:profiles} (right)
shows the $11$ possible graphs on $4$ vertices,\footnote{Actually there are $17$ local subgraphs when  considering vertex automorphisms. This is discussed in Section \ref{4prof-eq} in detail. For the purpose of initial exposition, we will ignore vertex automorphisms.} labeled as $F_i$, $\, i=0 \ldots 10$. Given a big graph $G(V,E)$ we are interested in estimating the global $4$-profile, \textit{i.e.} count how many times each $F_i$ appears as an induced subgraph of $G$. In addition to global graph statistics, 
we are interested in local 4-profiles: given a specific vertex $v_0$,  the local $4$-profile of $v_0$ is an 11-dimensional vector, with each coordinate $i$ counting how many induced $F_i$'s contain $v_0$. In Figure \ref{fg:ex2} we show an example of the local $4$-profile of a vertex. 

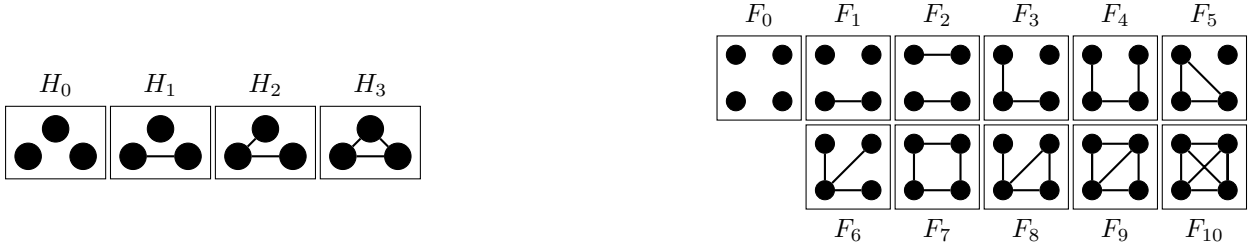
\begin{figure}[ht]
 \begin{subfigure}{ .2 \textwidth}	 
    \begin{tikzpicture}[
      inner/.style={circle,draw,fill=black,inner sep=1pt, minimum size=1em},
      outer/.style= {draw
     }
    ]
    \matrix (w0) [matrix of nodes, outer, nodes={inner}, label=$H_0$]{
      &{} \\
      {} & & {}\\
    };
    \matrix (w1) [matrix of nodes, outer, nodes={inner},right=\smalldist of w0, label=$H_1$]{
        &{} \\
       {} & & {}\\
     };
     \matrix (w2) [matrix of nodes, outer, nodes={inner},right=\smalldist of w1, label=$H_2$]{
        &{} \\
       {} & & {}\\
     };
     \matrix (w3) [matrix of nodes, outer, nodes={inner},right=\smalldist of w2, label=$H_3$]{
        &{} \\
       {} & & {}\\
     };
    \draw[thick] (w1-2-1)--(w1-2-3);
     \draw[thick] (w2-2-3)--(w2-2-1)--(w2-1-2);
     \draw[thick] (w3-2-1)--(w3-1-2)--(w3-2-3)--(w3-2-1);
    \end{tikzpicture}
 \end{subfigure}
 \hspace{6cm}
\begin{subfigure}{.5 \textwidth}
 	\begin{tikzpicture}[
 	inner/.style={circle,draw,fill=black,inner sep=1pt, minimum size=.72em},
 	outer/.style= {draw
 	}
 	]
	\vspace{-1cm}
 	\matrix (w0) [matrix of nodes, outer, nodes={inner}, label=$F_0$, column sep=10pt, row sep=10pt]{
 	     {} &{} \\ 
 	    {} & {}\\
 	  };
 	  \matrix (w1) [matrix of nodes, outer, nodes={inner},right=\smalldist of w0, label=$F_1$, column sep=10pt, row sep=10pt]{
 	      {} &{} \\ 
 	         {} & {}\\
 	    };
 	  \matrix (w2) [matrix of nodes, outer, nodes={inner},right=\smalldist of w1, label=$F_2$, column sep=10pt, row sep=10pt]{
 	      {} &{} \\ 
 	         {} & {}\\
 	    };
 	  \matrix (w3) [matrix of nodes, outer, nodes={inner},right=\smalldist of w2, label=$F_3$, column sep=10pt, row sep=10pt]{
 	    {} &{} \\ 
 	       {} & {}\\
 	    };
 		\matrix (w4) [matrix of nodes, outer, nodes={inner},right=\smalldist of w3, label=$F_4$, column sep=10pt, row sep=10pt]{
 		{} &{} \\ 
 		{} & {}\\
 		};
 		\matrix (w5) [matrix of nodes, outer, nodes={inner},right=\smalldist of w4, label=$F_5$, column sep=10pt, row sep=10pt]{
 		{} &{} \\ 
 		{} & {}\\
 		};
 		\matrix (w6) [matrix of nodes, outer, nodes={inner},below=\smalldist of w1, label=below:$F_6$, column sep=10pt, row sep=10pt]{
 		{} &{} \\ 
 		{} & {}\\
 		};
 		\matrix (w7) [matrix of nodes, outer, nodes={inner},right=\smalldist of w6, label=below:$F_7$, column sep=10pt, row sep=10pt]{
 		{} &{} \\ 
 		{} & {}\\
 		};
 		\matrix (w8) [matrix of nodes, outer, nodes={inner},right=\smalldist of w7, label=below:$F_8$, column sep=10pt, row sep=10pt]{
 		{} &{} \\ 
 		{} & {}\\
 		};
 		\matrix (w9) [matrix of nodes, outer, nodes={inner},right=\smalldist of w8, label=below:$F_9$, column sep=10pt, row sep=10pt]{
 		{} &{} \\ 
 		{} & {}\\
 		};
 		\matrix (w10) [matrix of nodes, outer, nodes={inner},right=\smalldist of w9, label=below:$F_{10}$, column sep=10pt, row sep=10pt]{
 		{} &{} \\ 
 		{} & {}\\
 		};
 	\draw[thick] (w1-2-1)--(w1-2-2);
 	\draw[thick] (w2-2-2)--(w2-2-1);
 	\draw[thick] (w2-1-2)--(w2-1-1);
 	\draw[thick] (w3-1-1)--(w3-2-1)--(w3-2-2);
 	\draw[thick] (w4-1-1)--(w4-2-1)--(w4-2-2)--(w4-1-2);
 	\draw[thick] (w5-1-1)--(w5-2-1)--(w5-2-2)--(w5-1-1);
 	\draw[thick] (w6-1-1)--(w6-2-1)--(w6-2-2);
 	\draw[thick] (w6-1-2)--(w6-2-1);
 	\draw[thick] (w7-1-1)--(w7-2-1)--(w7-2-2)--(w7-1-2)--(w7-1-1);
 	\draw[thick] (w8-1-1)--(w8-2-1)--(w8-2-2)--(w8-1-2)--(w8-2-1);
 	\draw[thick] (w9-1-1)--(w9-2-1)--(w9-2-2)--(w9-1-2)--(w9-2-1)--(w9-1-1)--(w9-1-2);
 	\draw[thick] (w10-1-1)--(w10-2-1)--(w10-2-2)--(w10-1-2)--(w10-1-1)--(w10-2-2)--(w10-1-2)--(w10-2-1);
  \end{tikzpicture}
\end{subfigure}
\vspace{-2mm}	
\caption{Left: The $4$ possible non-isomorphic graphs on $3$ vertices used to calculate the $3$-profile of a graph $G$. The 3-profile counts how many times each $H_i$ appears in $G$. Right: The $11$ non-isomorphic graphs on $4$ vertices used to calculate the $4$-profile of a graph. }
 \label{fig:profiles}
 \vspace{-2mm}
 \end{figure}

\begin{figure}[ht]
\centering
	\begin{tikzpicture}[
	inner/.style={circle,draw,fill=white,inner sep=1pt, minimum size=1em},
	outer/.style= {draw}
	]
	\matrix (w0) [matrix of nodes, nodes={inner}, column sep=15pt, row sep=15pt]{
		 & {$v_1$} & \\
	    {$v_0$} &{$v_2$} & {$v_4$} \\
	     & {$v_3$} & \\
	  };
	\draw[thick] (w0-2-1)--(w0-1-2)--(w0-2-3)--(w0-3-2)--(w0-2-1)--(w0-2-2)--(w0-1-2);
\end{tikzpicture}
\vspace{-2mm}
\caption{An example for local profiles. The global $3$-profile is $[0,3,6,1]$. The global $4$-profile is $[0,0,0,0,2,0,0,1,2,0,0]$. The local $4$-profile of $v_0$ is $[0,0,0,0,1,0,0,1,2,0,0]$. The first $1$ in the profile corresponds to the subgraph $F_4$. Notice that $v_0$ participates 
in only one $F_4$, jointly with vertices $v_2,v_3,v_4$. 
}
\label{fg:ex2}
\vspace{-2mm}
\end{figure}
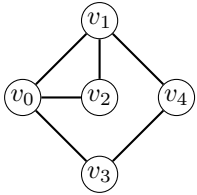

The local $4$-profile of a vertex can be seen as an embedding in an $11$-dimensional space that characterizes the local geometry of its neighborhood: vertices that connect different clusters will have different local $4$-profiles compared to those that are only part of one dense cluster. A very naive estimation of $4$-profiles requires examining $\binom{n}{4}$ possible subgraphs. Furthermore, for estimating each local $4$-profile independently, this computation has to be repeated $n$ times, once for each vertex. 
Note that the local $4$-profiles may be rescaled and added together to obtain the global $4$-profile. Since some of the $4$-profile subgraphs are disconnected (like $F_0, F_1, F_5$), local $4$-profiles contain information beyond the local neighborhood of a vertex. Therefore, in a distributed setting, it seems that global communication is required.

\subsection{Our contributions}
\vspace{-2mm}
Surprisingly, we show that very limited global information is sufficient to calculate all local $4$-profiles and that it can be re-used to calculate all the local $4$-profiles in parallel. Specifically, we introduce a distributed algorithm to estimate all the local $4$-profiles and the global profile of a big graph. This restrictive setting does not allow communication between nonadjacent vertices, a key component of previous centralized, shared-memory approaches. Our algorithm relies on two novel theoretical results: \\
\textbf{Two-hop histograms are sufficient:} Our algorithm operates by having each vertex first perform local message-passing to its neighbors and then solve a novel system of equations for the local $4$-profile.
Focusing on a vertex $v_0$, the first easy step is to calculate its local $3$-profile. It can be shown that the local $3$-profile combined with the full two-hop connectivity information is sufficient to estimate the local $4$-profile for each vertex $v_0$. This is not immediately obvious, since naively counting the $3$-path (an automorphism of $F_4$) would require $3$-hop connectivity information.
However, we show that less information needs to be communicated. Specifically, we prove that the triangle list combined with what we call the \textit{two-hop histogram} is sufficient: for each vertex $v_i$ that is 2-hops from $v_0$, we only need 
the \textit{number of distinct paths} connecting it to $v_0$, not the full two hop neighborhood. If the two-hop neighborhood is a tree, this amounts to no compression. However, for real graphs the two-hop histogram saves a factor of $3$x to $5$x in communication in our experiments. This enables (Section 4) an even more significant running time speedup of $5-10$ times on several distributed experiments using $12-20$ compute nodes. \\
\textbf{Profile Sparsification:} One idea that originated from triangle counting~\cite{TsourakakisSubsamp2009,Tsourakakis2011sparsifier} 
is to first perform random subsampling of edges to create a sparse graph called a \textit{triangle sparsifier}. Then count the number of triangles in the sparse graph and re-scale appropriately to estimate the number in the original graph. The main challenge is proving that the randomly sparsified graph has a number of triangles sufficiently concentrated around its expectation. Recently this idea was generalized to $3$-profile sparsifiers in \cite{ourpaperKDD}, with concentration results for estimating the full $3$-profile. These papers rely on Kim-Vu polynomial concentration techniques~\cite{KimVu2000concentration} that scale well in theory, but typically the estimated errors are orders of magnitude larger than the measured quantities for reasonable graph sizes.
In this paper, we introduce novel concentration bounds for global $k$-profile sparsifiers that use a novel information theoretic 
technique called read-$k$ functions~\cite{GavinskyReadK}. Our read-$k$ bounds allow usable concentration inequalities for sparsification factors of approximately $0.4$ or higher (Section \ref{sec:results}). Note that removing half the edges of the graph does not accelerate the running time by a factor of $2$, but rather by a factor of nearly $8$, as shown in our experiments. \\
\textbf{System implementation and evaluation:} We implemented our algorithm using GraphLab PowerGraph~\cite{powergraphGAS2012} and tested it in multicore and distributed systems scaling up to $640$ cores. The benefits of two-hop histogram compression and sparsification allowed us to compute the global
and local $4$-profiles of very large graphs. For example, for a graph with $5$ million vertices and $40$ million edges we estimated the global $4$-profile in less than $10$ seconds. For computing all local $4$-profiles on this graph, the previous state of the art~\cite{Hocevar2014bio} required $1200$ seconds while our algorithm required less than $100$ seconds.

\subsection{Related work}
\vspace{-2mm}
The problem of counting triangles in a graph has been addressed in distributed \cite{Schank2007} and streaming \cite{becchetti08} settings, and this is a standard analytics task for graph engines \cite{Satish2014}. The Doulion algorithm \cite{TsourakakisSubsamp2009} estimates a graph's triangle count via simple edge subsampling. Other recent work analyzes more complex sampling schemes 
\cite{seshadri2012wedge,seshadhriSublinearFOCS2015} 
and extends to approximately counting certain $4$-subgraphs \cite{Ahmed2014,Jha2014}.
 Mapreduce algorithms for clique counting were introduced by Finocchi et al. \cite{Finocchi2014}. Our approach is similar to that of \cite{ourpaperKDD}, which calculates all $3$-subgraphs and a subset of $4$-subgraphs distributedly using \textit{edge pivots}. In this work we introduce the $2$-hop histogram to compute all $4$-subgraphs.

Concentration inequalities for the number of triangles in a random graph have been studied extensively. The standard method of martingale bounded differences (McDiarmid's inequality) is known to yield weak concentrations around the mean for this problem. The breakthrough work of Kim and Vu \cite{KimVu2000concentration} provides superior asymptotic bounds by analyzing the concentration of multivariate polynomials.
This was later improved and generalized in \cite{Janson2002}, and 
applied to subsampled triangle counting in \cite{Tsourakakis2011sparsifier}. Our analysis uses a different technique called read-$k$ functions \cite{GavinskyReadK} that produces sharper concentration results for practical problem sizes.\footnote{Even though concentrations using Kim-Vu become tighter asymptotically, this happens for graphs with well over $10^{13}$ edges (see also Figure \ref{fig:accuracy1ACC}).}

Previous systems of equations relating clique counts to other $4$-subgraphs appear in \cite{Kowaluk2013eqs}, \cite{Williams2014mod}, \cite{Hocevar2014bio}, and \cite{Ahmed2015ICDM}. However, these are applied in a centralized setting and depend on information collected from nonadjacent vertices. In this work, we use additional equations to solve the same system by sharing only local information over adjacent vertices.
The connected $4$-subgraphs, or graphlets \cite{przPPIorig}, have found applications in fields such as bioinformatics 
\cite{Shervashidze2009Learning}
and computational neuroscience \cite{Fei2014graphletMRI}. In \cite{LavAppFullProfiles}, authors use \emph{all} global 4-subgraphs to analyze neuronal networks. We evaluate our algorithm against $\orca$ \cite{Hocevar2014bio}, a centralized $4$-graphlet counting algorithm, as well as its GPU implementation \cite{MilinkovicGPU2015}. Notice that while $\orca$ calculates only connected 4-subgraphs, our algorithm calculates all the connected and the disconnected 4-subgraphs for each vertex.

Concurrent with the writing of this paper, a parallel algorithm for $4$-subgraph counting was introduced in \cite{Ahmed2015ICDM}. 
Our algorithm differs by working within GraphLab PowerGraph's Gather-Apply-Scatter framework instead of the native, multithreaded C\verb!++! implementation of~\cite{Ahmed2015ICDM}. 
In terms of empirical performance, both our work and~\cite{Ahmed2015ICDM} show similar running time improvements of one order of magnitude over \orca. A more detailed comparison would depend on the hardware and datasets used. 
More importantly, our work focuses on a distributed (as opposed to multicore parallel) framework, and for our setting minimizing communication is critical.

Our theoretical results are significantly different from~\cite{Ahmed2015ICDM} and may be useful in improving that system also. Specifically, \cite{Ahmed2015ICDM} explicitly counts the number of $4$-cycles ($F_7$ in Figure \ref{fig:profiles}, Right) whereas our results show that it is possible to use only two-hop histograms instead. This results in less communication overhead, but this benefit is perhaps not as significant for shared-memory multicore platforms. 
Our second theoretical result, the novel sparsification concentration bounds, can be used for any subgraph estimation algorithm and quantify a provable tradeoff between speed and accuracy.

\section{Distributed Algorithm}\label{4prof-eq}
\begin{figure}[t]
   \centering
   \begin{subfigure}{.45 \textwidth}
   \vspace{8em}
   \begin{tikzpicture}[
   inner/.style={circle,draw,fill=black,inner sep=1pt, minimum size=.72em},
   outer/.style= {draw
   }
   ]
     \matrix (w0) [matrix of nodes, outer, nodes={inner}, label=$H_0(v)$]{
            &{} \\
           |[fill=white]| {} & & {}\\
         };
       \matrix (w1) [matrix of nodes, outer, nodes={inner}, right=\smalldist of w0, label=$H_1^{e}(v)$]{
          & {} \\
         |[fill=white]| {} & & {}\\
       };
     \matrix (w1d) [matrix of nodes, outer, nodes={inner},right=\smalldist of w1, label=$H_1^{d}(v)$]{
            & |[fill=white]|{} \\
            {} & & {}\\
         };
      \matrix (w2) [matrix of nodes, outer, nodes={inner},right=\smalldist of w1d, label=$H_2^c(v)$]{
          &{} \\
         |[fill=white]|{} & & {}\\
       };
       \matrix (w2e) [matrix of nodes, outer, nodes={inner},right=\smalldist of w2, label=$H_2^{e}(v)$]{
           & |[fill=white]|{} \\ 
              {} & & {}\\
           };
     \matrix (w3) [matrix of nodes, outer, nodes={inner},right=\smalldist of w2e, label=$H_3(v)$]{
          &{} \\
             |[fill=white]| {} & & {}\\
       };
   \draw[thick] (w1-2-1)--(w1-2-3);
   \draw[thick] (w1d-2-1)--(w1d-2-3);
     \draw[thick] (w2-2-3)--(w2-2-1)--(w2-1-2);
       \draw[thick] (w2e-2-3)--(w2e-2-1)--(w2e-1-2);
     \draw[thick] (w3-2-1)--(w3-1-2)--(w3-2-3)--(w3-2-1);
   \end{tikzpicture}
   \caption{}
   \end{subfigure}
   \begin{subfigure}{ .45 \textwidth}
   \begin{tikzpicture}[
   inner/.style={circle,draw,fill=black,inner sep=1pt, minimum size=.72em},
   outer/.style= {draw
   }
   ]
   \matrix (w3a) [matrix of nodes, outer, nodes={inner}, label=$F_3(v)$, label=left:$F$, column sep=10pt, row sep=10pt]{
        |[fill=white]|{} &{} \\ 
       {} & {}\\
     };
   \matrix (w4a) [matrix of nodes, outer, nodes={inner}, right=\smalldist of w3a, label=$F_4(v)$, column sep=10pt, row sep=10pt]{
        |[fill=white]|{} &{} \\ 
       {} & {}\\
     };
     \matrix (w6a) [matrix of nodes, outer, nodes={inner},right=\smalldist of w4a, label=$F_6(v)$, column sep=10pt, row sep=10pt]{
         |[fill=white]|{} &{} \\ 
            {} & {}\\
       };
     \matrix (w8a) [matrix of nodes, outer, nodes={inner},right=\smalldist of w6a, label=$F_8(v)$, column sep=10pt, row sep=10pt]{
         |[fill=white]|{} &{} \\ 
            {} & {}\\
       };
     \matrix (w9a) [matrix of nodes, outer, nodes={inner},right=\smalldist of w8a, label=$F_9(v)$, column sep=10pt, row sep=10pt]{
         |[fill=white]|{} &{} \\ 
            {} & {}\\
       };
     \matrix (w3b) [matrix of nodes, outer, nodes={inner},below=\smalldist of w3a, label=left:$F^{'}$, column sep=10pt, row sep=10pt]{
       {} &{} \\ 
          |[fill=white]|{} & {}\\
       };
     \matrix (w4b) [matrix of nodes, outer, nodes={inner},below=\smalldist of w4a, column sep=10pt, row sep=10pt]{
       {} &{} \\ 
          |[fill=white]|{} & {}\\
       };
   \matrix (w6b) [matrix of nodes, outer, nodes={inner},below=\smalldist of w6a, column sep=10pt, row sep=10pt]{
       {} &{} \\ 
          |[fill=white]|{} & {}\\
       };
     \matrix (w8b) [matrix of nodes, outer, nodes={inner},below=\smalldist of w8a, column sep=10pt, row sep=10pt]{
       {} &{} \\ 
          |[fill=white]|{} & {}\\
       };
     \matrix (w8c) [matrix of nodes, outer, nodes={inner},below=\smalldist of w8b, label={[xshift = -650*\smalldist]180:$F^{''}$}, column sep=10pt, row sep=10pt]{ 
       {} &|[fill=white]|{} \\ 
          {} & {}\\
       };
     \matrix (w9b) [matrix of nodes, outer, nodes={inner},below=\smalldist of w9a, column sep=10pt, row sep=10pt]{
       {} &{} \\ 
          |[fill=white]|{} & {}\\
       };
   \draw[thick] (w3a-1-1)--(w3a-2-1)--(w3a-2-2);
   \draw[thick] (w3b-1-1)--(w3b-2-1)--(w3b-2-2);
   \draw[thick] (w4a-1-1)--(w4a-2-1)--(w4a-2-2)--(w4a-1-2);
   \draw[thick] (w4b-1-1)--(w4b-2-1)--(w4b-2-2)--(w4b-1-2);
   \draw[thick] (w6a-1-1)--(w6a-2-1)--(w6a-2-2)--(w6a-2-1)--(w6a-1-2);
   \draw[thick] (w6b-1-1)--(w6b-2-1)--(w6b-2-2)--(w6b-2-1)--(w6b-1-2);
   \draw[thick] (w8a-1-1)--(w8a-2-1)--(w8a-2-2)--(w8a-1-2)--(w8a-2-1);
   \draw[thick] (w8b-1-1)--(w8b-2-1)--(w8b-2-2)--(w8b-1-2)--(w8b-2-1);
   \draw[thick] (w8c-1-1)--(w8c-2-1)--(w8c-2-2)--(w8c-1-2)--(w8c-2-1);
   \draw[thick] (w9a-1-1)--(w9a-2-1)--(w9a-2-2)--(w9a-1-2)--(w9a-2-1)--(w9a-1-1)--(w9a-1-2);
   \draw[thick] (w9b-1-1)--(w9b-2-1)--(w9b-2-2)--(w9b-1-2)--(w9b-2-1)--(w9b-1-1)--(w9b-1-2);
   \end{tikzpicture} 
       \caption{}
   \end{subfigure}
    \caption{Unique (a) $3$-subgraphs and (b) $4$-subgraphs from the perspective of the white vertex is $v$. $F_8$ is the only subgraph with a third vertex automorphism $F_8^{''}$ because no other subgraph contains vertices with $3$ different degrees.}
    \label{fig:awsc3ego}
\end{figure}
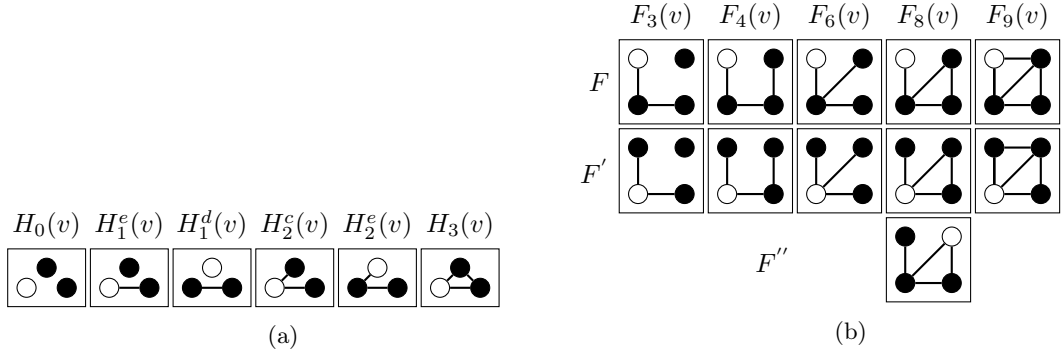

In this section, we describe $\prof$, our algorithm for computing the exact $4$-profiles in a distributed manner.
To the best of our knowledge, this is the first \textit{distributed} algorithm for calculating 4-profiles.
The key insight is to cast existing and novel equations into the GraphLab PowerGraph framework \cite{powergraphGAS2012} to get implicit connectivity information about vertices outside the $1$-hop neighborhood. Specifically, we construct the local $4$-profile from local $3$-profile, local $4$-clique count, and additional histogram information which describes the number of paths to all $2$-hop neighbors.

\begin{theorem}
There is a distributed algorithm that computes the exact local $4$-profile of a graph given each vertex has stored its local $3$-profile, triangle list, and $2$-hop histogram.
\end{theorem}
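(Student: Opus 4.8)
The plan is to set up and invert a fixed-size linear system relating the unknown induced local counts to \emph{non-induced} subgraph counts that each vertex can assemble from its stored data. First I would refine the $11$ undirected patterns $F_0,\dots,F_{10}$ into the $17$ rooted patterns of Figure~\ref{fig:awsc3ego}, in which the distinguished vertex $v_0$ occupies a fixed automorphism orbit; each entry of the local $4$-profile is a sum of rooted counts, so it suffices to recover all $17$ rooted quantities and then sum over orbits. I order the patterns by their number of edges, $0$ through $6$.

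The backbone of the argument is the inclusion--exclusion (Möbius) relation over the subgraph-containment lattice: for a rooted pattern $P$, the number of $4$-subsets containing $v_0$ whose induced graph contains $P$ as a rooted subgraph equals $\sum_{Q \supseteq P} m_{P,Q}\, x_Q$, where $x_Q$ is the induced rooted count and $m_{P,Q}$ counts rooted embeddings of $P$ into $Q$. Since $m_{P,P}\neq 0$ and $m_{P,Q}=0$ whenever $Q$ has fewer edges than $P$, the coefficient matrix $(m_{P,Q})$ is triangular with nonzero diagonal, hence invertible. It therefore suffices to show that every left-hand non-induced count is computable from the local $3$-profile, the triangle list, and the $2$-hop histogram; the induced profile then follows by exact back-substitution.

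Next I would exhibit each non-induced count. The densest pattern $K_4$ ($F_{10}$) is counted directly, since the triangle list hands $v_0$ the subgraph induced on $N(v_0)$, and the number of $4$-cliques through $v_0$ is the number of triangles in that subgraph. The diamond, paw, tailed-triangle and claw counts are bilinear expressions in the degrees, wedge counts, and triangle counts already present in the local $3$-profile and triangle list. The genuinely hard pattern is the $4$-cycle $C_4$ ($F_7$), whose naive enumeration needs $3$-hop information. The key lemma is that the non-induced $C_4$ count rooted at $v_0$ equals $\sum_{u \neq v_0} \binom{p(u)}{2}$, where $p(u)=|N(v_0)\cap N(u)|$ is precisely the path multiplicity stored in the $2$-hop histogram: a $4$-cycle through $v_0$ is determined by the vertex $u$ opposite $v_0$ together with an unordered pair of common neighbors of $v_0$ and $u$. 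Finally, the disconnected patterns $F_0,F_1,F_2,F_3,F_5$ are handled by complementation, expressing their induced counts through $|V|$, the global $3$-profile totals (cheap scalars broadcast once), $\deg(v_0)$, and the already-computed connected-pattern counts.

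The main obstacle is the $C_4$ lemma together with verifying that the assembled equations are simultaneously exact, of full rank, and realizable in the Gather--Apply--Scatter model. Exactness of the histogram identity requires careful accounting of distinctness of the four vertices and of overcounting across automorphism orbits; rank requires confirming that the chosen non-induced counts separate all $17$ rooted patterns, which I would check by writing the triangular matrix $(m_{P,Q})$ explicitly and verifying its diagonal is nonzero. For the distributed realization, each vertex gathers its neighbors' adjacency (giving the neighborhood subgraph and triangle list) and their neighbor lists (giving $p(u)$ for every $2$-hop neighbor $u$), then solves the fixed system locally; crucially no message ever reaches a nonadjacent vertex, which is exactly what replacing explicit $4$-cycle enumeration by the $2$-hop histogram buys us.
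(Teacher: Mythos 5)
Your proposal is correct in substance, and its crucial ingredient coincides with the paper's: the observation that rooted $4$-cycle counts are linear in the two-hop path multiplicities, which is exactly the histogram equation \eqref{eq:nbhPivot} (stated there in induced form, $F_7(v)+F_9(v)$, and only over opposite vertices $p \notin \Gamma(v)$; your non-induced version must be split as $\sum_{u \notin \Gamma(v_0)} \binom{p(u)}{2} + \sum_{u \in \Gamma(v_0)} \binom{n_{3,v_0u}}{2}$, the second sum coming from the edge/triangle data, since the stored histogram covers only non-neighbors). Where you genuinely differ is the linear-algebraic packaging. The paper never passes through pure containment (non-induced) counts: it assembles a system directly in the $17$ induced rooted counts, whose left-hand sides are edge-pivot aggregates \eqref{eq:global4pivot}--\eqref{eq:local3end} built from partially induced quantities (the scalars $n_{1,va}^e$, $n_{2,va}^c$ of \eqref{eq:local3} encode required \emph{non}-edges), together with the clique equations \eqref{eq:h8direct}, the histogram equation, and a normalization; linear independence of that particular system then has to be checked. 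Your M\"obius route makes invertibility automatic (the containment matrix is triangular with nonzero diagonal), at the price of having to exhibit all $17$ non-induced rooted counts explicitly; that is a legitimate trade and arguably cleaner, while the paper's equations are tailored so that every left-hand side is a one-round Gather--Apply--Scatter aggregate.

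Three of your computability claims need repair, though none is fatal given your resources. (i) Your list of connected patterns names the paw twice (``paw'' and ``tailed-triangle'' are the same graph) and omits the path $P_4$ --- the very pattern the paper flags as naively requiring $3$-hop information. Its non-induced count rooted at an endpoint is $\sum_{a \in \Gamma(v)} \bigl(W(a) - (|\Gamma(v)|-1) - n_{3,va}\bigr)$, where $W(a)$ is the non-induced wedge count with endpoint $a$; this requires neighbors' $3$-profiles gathered in one round, which is precisely the role of the terms $n_{2,a}^e$ and $n_{3,a}$ in the last two equations of \eqref{eq:local3end}. (ii) The non-induced diamond rooted at a degree-$2$ vertex equals $\sum_{(a,b) \in \Delta(v)} \bigl(|\Gamma(a) \cap \Gamma(b)| - 1\bigr)$, and this is \emph{not} a bilinear expression in degree, wedge, and triangle counts at $v$: the intersection sizes live on edges $ab$ not incident to $v$, so you need neighbors' triangle \emph{lists} gathered and filtered against $\Gamma(v)$ --- the same resource the paper exploits in \eqref{eq:h8direct}. (iii) ``Complementation'' for the disconnected patterns should be read as subtraction from totals, e.g. $n_{3,v}(|V|-3)$ equals $F_5(v)$ plus a known combination of connected rooted counts; the complement graph itself is not accessible to a distributed algorithm, but with that reading (and global scalars such as $|V|$ and $|E|$) your factorization of the disconnected non-induced counts is correct. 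With these repairs your argument goes through and constitutes a valid alternative proof of the theorem.
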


Note that the local $4$-profiles at each vertex can be added and appropriately rescaled (using the symmetries of each subgraph, also called automorphism orbits~\cite{przPPIorig}) to obtain the global $4$-profile. 

$\prof$ is implemented in the Gather-Apply-Scatter (GAS) framework \cite{powergraphGAS2012}. A distributed algorithm in this framework has 3 main phases: Gather, Apply and Scatter. Every vertex and edge has stored data which is acted upon. During the Gather phase, a vertex can access all its adjacent edges and neighbors and \emph{gather} data they possess, \textit{e.g.}, neighbor ID, using a custom reduce operation $\oplus$ (\textit{e.g.}, addition, concatenation). The accumulated information is available for a vertex at the next phase, Apply, in which it can change its own data. In the final Scatter phase, every edge sees the data of its (incident) vertices and operates on it to modify the edge data.  All nodes start each phase simultaneously, and if needed, the whole GAS cycle is repeated until the algorithm's completion.

$\prof$ solves a slightly larger problem of keeping track of counts of $17$ unique subgraphs up to vertex automorphism (see Figure \ref{fig:awsc3ego}). 
We will describe a full rank system of equations which is sufficient to calculate the local $4$-profile at every $v \in V$. 
The following subsections each explain a component of $\prof$. These separate routines are combined efficiently in Algorithm \ref{alg:4local} to calculate the local $4$-profile in a small number of GAS cycles.

\subsection{Edge pivot equations}

\begin{figure}[t]
	\centering
	\begin{tikzpicture}[
	inner/.style={circle,draw,fill=black,inner sep=.75pt, minimum size=.75em},
	outer/.style= {
	},
	marked/.style={
		fill=cyan,
	}
	]
	\matrix (c2) [matrix of nodes, outer, nodes={inner}, label={[label distance=.1cm]below:$\sum_{a \in \Gamma(v)} n_{2,va}^c n_{2,va}^e$}, column sep=10pt, row sep=10pt, ampersand replacement=\&]{
		\& {} \& \\
		|[fill=white]|{$v$} \& \& |[fill=brown!25]|{$a$} \\ 
		\& {} \& \\
	};
	\matrix (w9) [matrix of nodes, outer, nodes={inner}, right=20*\smalldist of c2, label={[label distance=.41 cm]below:$F_4^{'}(v)$}, column sep=10pt, row sep=10pt, ampersand replacement=\&]{
			{} \& {} \\ 
		|[fill=white]|{} \& |[fill=brown!25]|{}\\
	};
	\matrix (w10) [matrix of nodes, outer, nodes={inner},right=9*\smalldist of w9, label={[label distance=.5 cm]below:$2F_{7}(v)$}, column sep=10pt, row sep=10pt, ampersand replacement=\&]{
			{} \& {} \\ 
		|[fill=white]|{} \& |[fill=brown!25]|{}\\
	};
	\node[right=6*\smalldist of c2] (eq){$=$};
	\node[right=.5*\smalldist of w9] (pl){$+$};
	\draw[thick] (c2-1-2)--(c2-2-1)--(c2-2-3)--(c2-2-3)--(c2-3-2);
	\draw[thick] (w9-1-1)--(w9-2-1)--(w9-2-2)--(w9-1-2);
	\draw[thick] (w10-1-1)--(w10-2-1)--(w10-2-2)--(w10-1-2)--(w10-1-1);
	\end{tikzpicture}
	\caption{Edge pivot equation for vertex $v$ counting triangles as edges $va$ pivot about their common vertex $v$. The subgraphs $F_4^{'}(v)$ and $F_{7}(v)$ differ by one edge.}
	\label{fig:pivot}
\end{figure}

\vspace{-2mm}
The majority of our equations relate the local $4$-profile to neighboring local $3$-profiles with \textit{edge pivots} \cite{ourpaperKDD}. 
At each vertex $v$, each combinatorial equation relates a linear combination of the local $4$-subgraph counts to the count of a pair of $3$-subgraphs sharing an edge $va$. Some of these equations appear in a centralized setting in previous literature ({\cite{Kowaluk2013eqs}, \cite{Williams2014mod}, \cite{Hocevar2014bio}, \cite{Ahmed2015ICDM}).
In our algorithm, the $3$-subgraph pair count accumulates at $v$ as all incident edges $va$ \textit{pivot} over it. The edges fixed by a specific $3$-subgraph pair correspond to common edges among a subset of $4$-subgraphs.
Before that, in an initial GAS round, each vertex $v$ must gather the ID of each vertex in its neighborhood, \textit{i.e.} $a \in \Gamma(v)$, and the following quantities must be stored at each edge $va$ during Scatter phase:
\begin{align}
\begin{split}
n_{1,va}^e &= |\overline{\Gamma(v) \cup \Gamma(a)}| = |V| - (|\Gamma(v)| + |\Gamma(a)| - |\Gamma(v)\cap \Gamma(a)|), \\
n_{2,va}^c &= |\Gamma(v) \backslash \{\Gamma(a) \cup a\}| = |\Gamma(v)| - |\Gamma(v) \cap \Gamma(a)| - 1, \\
n_{2,va}^e &= n_{2,av}^c, \\
n_{3,va} &= |\Gamma(v) \cap \Gamma(a)| . \label{eq:local3}
\end{split}
\end{align}
\noindent\textbf{Gather:} The above quantities are summed at each vertex $v$ to calculate the local $3$-profile at $v$. For example, $n_{3,v} = \frac{1}{2}\sum_{a}n_{3,va}$. In addition, we gather the sum of functions of pairs of these quantities forming $13$ \textit{edge pivot} equations.
\begin{align}
\begin{split}
\sum_{a \in \Gamma(v)} \binom{n_{1,va}^e}{2} &= F_1(v) + F_2(v) , \\
\sum_{a \in \Gamma(v)} \binom{n_{2,va}^c}{2}  &= 3F_6^{'}(v) + F_8^{'}(v),  \\
\sum_{a \in \Gamma(v)} \binom{n_{3,va}}{2} &= F_9^{'}(v) + 3F_{10}(v) , \\
\sum_{a \in \Gamma(v)} n_{1,va}^en_{2,va}^c &= 2F_3^{'}(v) + F_4^{'}(v), \\
\sum_{a \in \Gamma(v)} n_{1,va}^en_{3,va} &= 2F_5(v) + F_8^{''}(v) , \\
\sum_{a \in \Gamma(v)} n_{2,va}^c n_{2,va}^e &= F_4^{'}(v) + 2F_7(v), \\
\sum_{a \in \Gamma(v)} n_{2,va}^c n_{3,va} &= 2F_8^{'}(v) + 2F_9^{'}(v) , \\
n_{1,v}^d |\Gamma(v)| &= F_2(v) + F_4(v) + F_8(v). \label{eq:global4pivot} 
\end{split}
\end{align}
The primed notation differentiates between subgraphs of different automorphism orbits, as in Figure \ref{fig:awsc3ego}. By accumulating pairs of $3$-profile structures as in \eqref{eq:global4pivot}, we receive aggregate connectivity information about vertices more than $1$ hop away. Consider the sixth equation as an example. The product between $n_{2,va}$ and $n_{2,va}^e$ subgraphs along edge $va$ forms $4$-node graphs with the following structural constraints: three vertex pairs are connected, two vertex pairs are disjoint, and one pair may be either connected or disjoint. $F_4^{'}(v)$ and $F_{7}(v)$ satisfy these constraints and differ on the unconstrained edge. Thus, as shown in Figure \ref{fig:pivot}, they both contribute to the sum of $n_{2,va}^c n_{2,va}^e$.

The following edge pivot equations are linearly independent when solving for the local $4$-profile only. Note the last $2$ equations require calculating the local $3$-profile:
\begin{align}
\begin{split}
\sum_{a \in \Gamma(v)} \binom{n_{2,va}^e}{2} &= F_6(v) + F_8(v),  \\
\sum_{a \in \Gamma(v)} n_{1,va}^en_{2,va}^e &= F_3(v) + F_4(v), \\
\sum_{a \in \Gamma(v)} n_{2,va}^e n_{3,va} &= F_8^{''}(v) + 2F_9(v) , \\
\sum_{a \in \Gamma(v)} n_{3,a} - n_{3,va} &= F_8(v) + 2F_9(v) + 3F_{10}(v) , \\
\sum_{a \in \Gamma(v)} n_{2,a}^e - n_{2,va}^c &= F_4(v) + 2F_7(v) + F_8^{''}(v) + 2F_{9}^{'}(v). \label{eq:local3end}
\end{split}
\end{align}

\noindent\textbf{Apply:} The left hand sides of all such equations are stored at $v$.

\subsection{Clique counting}
\vspace{-2mm}
The aim of this subtask is to count $4$-cliques that contain the vertex $v$. For this, we accumulate a list of triangles at each vertex $v$. Then, at the Scatter stage for every $va$, it is possible to check if neighbors common to $v$ and $a$ have an edge between them. This implies a $4$-clique. 

\noindent\textbf{Scatter:} In addition to the intersection size $|\Gamma(v) \cap \Gamma(a)|$ at each edge $va$ as before, we now require the intersection list $\{b : b \in \Gamma(v), b \in \Gamma(a) \}$ as a starting point.

\noindent\textbf{Gather, Apply:} The intersection list is gathered at each vertex $v$. This produces all pairs of neighbors in $\Gamma(v)$ which are adjacent, \textit{i.e.} all triangles containing $v$. It is stored as $\Delta(v)$ during the Apply stage at $v$.

\noindent\textbf{Gather, Apply:} 
Each edge $va$ computes the number of $4$-cliques by counting how many pairs in $\Delta(a)$ contain exactly two neighbors of $v$. We use a similar equation to calculate $F_8(v)$ concurrently:
\begin{align}
\begin{split}
\sum_{a \in \Gamma(v)}  |(b,c) \in \Delta(a) : b \in \Gamma(v), c \in \Gamma(v)| &= 3F_{10}(v) ,\\ 
\sum_{a \in \Gamma(v)}  |(b,c) \in \Delta(a) : b \notin \Gamma(v), c \notin \Gamma(v)| &= F_{8}(v) . \label{eq:h8direct}
\end{split}
\end{align}
At the Apply stage, store the left hand sides as vertex data. 

\subsection{Histogram 2-hop information}\label{subsec:nh}
\vspace{-2mm}
Instead of calculating the number of cycles $F_7(v)$ directly, we can simply construct another linearly independent equation and add it to our system. 
Let each vertex $a$ have a vector of (vertex ID, count) pairs $(p,c_a[p])$ for each of its adjacent vertices $p$. Initially, $c_a[p] = 1$ and this \textit{histogram} contains the same information as $\Gamma(a)$.
For any $a \in \Gamma(v)$ and $p \notin \Gamma(v)$, $c_a[p] = 1 \Leftrightarrow $ $vap$ forms a $2$-path. Thus, $v$ can collect these vectors to determine the total number of $2$-paths from $v$ to $p$. 
This lets us calculate a linear combination involving cycle subgraph counts with an equation that is linearly independent from the others in our system. 

\noindent\textbf{Gather:} At each $v$, take a union of histograms from each neighbor $a$, resolving duplicate entries with the reduce operation 
$(p,c_{a_1} ) \oplus (p,c_{a_2}) = (p, c_{a_1} + c_{a_2})$.

\noindent\textbf{Apply:} Given the gathered histogram vector $\{\mathop{\oplus}_{a \in \Gamma(v)} c_a[p] \}_{p \notin \Gamma(v)}$, calculate the number of non-induced $4$-cycles involving $p$ and two neighbors:
\begin{align}
\sum_{p \notin \Gamma(v)} \binom{\mathop{\oplus}_{a \in \Gamma(v)} c_a[p]}{2} &= F_7(v) + F_9(v) . \label{eq:nbhPivot}
\end{align}

Next, we upper bound savings from our $2$-hop histogram by analyzing the improvement when the only information transmitted across the network to a vertex $v$ is each non-neighboring vertex and its final count $\oplus_{a \in \Gamma(v)} c[p]$. Let $h_v~=~|\Gamma(\Gamma(v)) \; \backslash \; \{\Gamma(v)  \cup v \}|$. For each $v$, the difference between full and histogram information is at most $\sum_{a \in \Gamma(v)} (|\Gamma(a)| - 1)  - 2h_v$.
The exact benefit of \eqref{eq:nbhPivot} depends on the internal implementation of the reduce operation $\oplus$ as pairs of neighbors are gathered. 

Counting the number of distinct pairs of $2$-paths to each $2$-hop neighbor, \textit{i.e.} $\frac{1}{2}(c[p]^2 - c[p])$, requires counting the second moment of $c$ taken over $h_v$ terms. Due to a result by Alon (\cite{alonMemory2002}, Proposition 3.7), the memory required to count this value exactly (moreover, to approximate it deterministically) is $\Omega(h_v)$. Thus, up to implementation details, our memory use is optimal.

\subsection{Normalization and symmetry}
Our final local equation comes from summing the local $4$-profile across all $17$ automorphisms: 
\begin{align}
\sum_{i}^{17} F_i(v) = \binom{|V| - 1}{3} .
\end{align}

To calculate the global $4$-profile, we utilize global symmetry and scaling equations. Let $F_i = \sum_{v \in V}F_i(v)$. Globally, each subgraph count is in exact proportion with the same subgraph counted from a different vertex automorphism. The ratio depends on the subgraph's degree distribution:
\begin{align}
\begin{split}
F_3 &= 2F_3^{'} , \quad F_4 = F_4^{'} , \quad F_6 = 3F_6^{'} , \\
F_8 &= F_8^{'} , \quad F_8^{''} = 2F_8 , \quad F_9 = F_9^{'} . 
\end{split}
\end{align}

Global symmetry makes the equation for $F_8$ and the system \eqref{eq:local3end} linearly dependent. We sum across vertices, inverting a single $11 \times 11$ system to yield the final global $4$-profile $[N_0 , \ldots N_{10}^T]$ by scaling appropriately:
\begin{align}
\begin{split}
N_0 & = \frac{F_0}{4} , \quad N_1 = \frac{F_1}{2} , \quad N_2 = \frac{F_2}{4} , \quad N_3 = F_3^{'} , \\
N_4 &= \frac{F_4^{'}}{2} , \quad N_5 = \frac{F_5}{3}, \quad N_6 = F_6^{'} , \quad N_7 = \frac{F_7}{4}, \\
N_8 &= F_8 , \quad N_9  = \frac{F_9}{2} , \quad N_{10} = \frac{F_{10}}{4} .
\end{split}
\end{align}

\vspace{-2mm}
\begin{algorithm}[ht]
    \caption{\prof}
   \label{alg:4local}
\begin{algorithmic}[1]
    \STATE {\bfseries Input:} Graph $G(V,E)$ with $|V|$ vertices, $|E|$ edges. 
    \STATE {\bfseries Gather:} For each vertex $v$ union over edges of the `other' vertex in the edge, $\cup_{a \in \Gamma(v)} a = \Gamma(v)$.
    \STATE {\bfseries Apply:} Store the gather as vertex data $\texttt{v.nb}$, size automatically stored.
    \STATE {\bfseries Scatter:} For each edge $e_{va}$, compute and store scalars in \eqref{eq:local3}.
    \STATE {\bfseries Gather:} For each edge $e_{va}$, sum edge scalar data of neighbors in \eqref{eq:global4pivot} - \eqref{eq:local3end} and combine two-hop histograms.
    \STATE {\bfseries Apply:} For each vertex $v$, sum over $p \notin \Gamma(v)$ in \eqref{eq:nbhPivot}, store other data in array \texttt{v.u}. No Scatter.
    \STATE {\bfseries Gather:} For each vertex $v$ collect pairs of connected neighbors in $\Delta(v)$.
    \STATE {\bfseries Apply:} Store connected neighbor (triangle) list as vertex data $\texttt{v.conn}$. No Scatter.
    \STATE {\bfseries Gather:} For each vertex $v$ sum \eqref{eq:h8direct}.
    \STATE {\bfseries Apply:} Append data to array $\texttt{v.u}$. Multiply \texttt{v.u} by a matrix to solve system of equations.
    \RETURN [\texttt{v: v.N0 v.N1 v.N2~\ldots~v.N10}]
\end{algorithmic}
\end{algorithm}

\section{Sparsification and Concentration}\label{sec:spars}

In this section, we describe the process for approximating the exact number of subgraphs in a graph $G$. Denote the exact counts by $[N_0 \ldots N_{10}]^T$ and the estimates by $[X_0 \ldots X_{10}]^T$.

We are sparsifying the original graph $G$ by keeping each edge independently with probability $p$.
Denote the random subsampled graph by $\tilde{G}$ and its global $4$-profile by $[Y_0 \ldots Y_{10}]^T$. Clearly each triangle survives with probability $p^3$  and each $4$-clique survives with $p^6$. Therefore, in expectation, $\mathbb{E}[Y_{10}] = p^6 N_{10}$ and $X_{10} = \frac{1}{p^6}Y_{10}$ is unbiased. 

This simple correspondence does not hold for other subgraphs: each triangle in $\tilde{G}$ can only be a triangle in $G$ that survived edge removals, but other subgraphs of $\tilde{G}$ could be originating from multiple subgraphs of $G$ depending on the random sparsification process. 
We can, however, relate the original $4$-profile vector to the expected subsampled $4$-profile vector by a matrix multiplication. 
Let $F(abcd)$ and $\tilde{F}(abcd)$ represent the induced $4$-subgraph on the vertices $abcd$ before and after subsampling, respectively. Then define $\mathbf{H}$ by $H_{ij} = \mathbb{P}(\tilde{F}(abcd) = F_i \:|\: F(abcd) = F_j)$. Thus, we form an unbiased estimator, \textit{i.e.} $\mathbb{E}[X_{i}] = N_{i},~i=1, \ldots, 10$, by inverting the edge sampling matrix.

For $3$-profiles, this process is described by the following system of equations:
\begin{align}
  \begin{bmatrix}
    \mathbb{E}[Y_0] \\ \mathbb{E}[Y_1] \\ \mathbb{E}[Y_2] \\ \mathbb{E}[Y_3]
   \end{bmatrix}
    = \begin{bmatrix}
	  1 & 1-p & (1-p)^2 & (1-p)^3 \\
	  0 & p & 2p(1-p) & 3p(1-p)^2 \\
	  0 & 0 & p^2 & 3p^2(1-p) \\
	  0 & 0 & 0 & p^3
	\end{bmatrix}
	\begin{bmatrix}
		n_0 \\ n_1 \\ n_2 \\ n_3
	\end{bmatrix} .
\end{align}

For $4$-profiles, the vectors are $11$ dimensional and a similar linear system can be explicitly computed -- we include the equations in the Appendix. This matrix turns out to be invertible and we can therefore calculate the $4$-profile exactly if we have access to the expected values of the sparsified $4$-profile. Of course, we can only obtain one sample random graph and calculate that $4$-profile, which will be an accurate estimate if the $4$-profile quantities are sufficiently concentrated around their expectation.

\subsection{Graph k-profile concentration} 
Previous work used this idea of graph sparsification for triangle counting~\cite{Tsourakakis2011sparsifier} and $3$-profiles~\cite{ourpaperKDD}.
The main concentration tool used was the Kim and Vu polynomial concentration~\cite{KimVu2000concentration,Tsourakakis2011sparsifier} which unfortunately gives very loose bounds for practical graph sizes. Figure \ref{fig:accuracy1ACC} compares the accuracy bound derived in this section to the bound predicted by \cite{KimVu2000concentration}. Clearly the Kim-Vu concentration does not provide meaningful bounds for the experiments in Section \ref{sec:results}. However, our results match observed sparsifier accuracy much more closely.

Our novel concentration results that exploit the fact that partial derivatives of the desired quantities are sparse in the number of edge variables. This allows us to use a novel information theoretic concentration technique called read-$k$ functions~\cite{GavinskyReadK}.
For simplicity, we only explain the concentration of $4$-cliques ($F_{10}$ subgraphs) here. 
We establish the general result for all $11$ $4$-profile variables in the Appendix. Our main concentration result is as follows:
\begin{theorem}\label{lem:cliqueRK}
Let $G$ be a graph with $N_{10}$ $4$-cliques, and let $k_{10}$ be the maximum number of $4$-cliques sharing a common edge. Let $X_{10}$ be the $4$-clique estimate obtained from subsampling each edge with probability $0 < p \leq 1$, choose $0 < \delta < 1$, and choose $\epsilon_{RK} > 0$. If 
\begin{align*}
p &\geq \left( \frac{\log(2/\delta)k_{10}}{2 \epsilon_{RK}^2 N_{10}} \right)^{1/12},
\end{align*}
then $\left| N_{10} - X_{10} \right| \leq \epsilon_{RK} N_{10}$ with probability at least $1 - \delta$.
\end{theorem}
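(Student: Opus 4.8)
The plan is to recognize the sparsified $4$-clique count $Y_{10} = p^6 X_{10}$ as a sum of Boolean functions in which each edge variable participates in only a few terms, so that the read-$k$ concentration inequality of \cite{GavinskyReadK} applies directly. First I would set up the independent random variables: for each edge $e \in E$, let $X_e$ be the indicator that $e$ survives sparsification, so the $X_e$ are independent Bernoulli$(p)$. For each of the $N_{10}$ four-cliques $T$ in $G$, define $f_T = \prod_{e \in T} X_e$, the indicator that all six edges of $T$ survive. Then $Y_{10} = \sum_T f_T$ counts surviving cliques, and by independence $\mathbb{E}[Y_{10}] = p^6 N_{10}$, matching the unbiasedness already noted in the text.

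The crucial structural observation is that $\{f_T\}$ forms a read-$k_{10}$ family: an edge variable $X_e$ appears in $f_T$ exactly when $e$ is one of the six edges of $T$, and by definition at most $k_{10}$ distinct four-cliques share any common edge. Hence each $X_e$ influences at most $k_{10}$ of the $N_{10}$ Boolean functions. I would then invoke the read-$k$ tail bound, which for a sum of $m$ read-$k$ Boolean functions gives $\Pr[|Y_{10} - \mathbb{E}[Y_{10}]| \geq \lambda] \leq 2\exp(-2\lambda^2/(m k))$; here $m = N_{10}$ and $k = k_{10}$.

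It remains to convert the additive deviation on $Y_{10}$ into the claimed relative deviation on $X_{10}$ and to solve for $p$. Since $X_{10} = Y_{10}/p^6$, the event $|N_{10} - X_{10}| \leq \epsilon_{RK} N_{10}$ is exactly $|Y_{10} - p^6 N_{10}| \leq \epsilon_{RK} p^6 N_{10}$, so I would set $\lambda = \epsilon_{RK} p^6 N_{10}$. Substituting into the read-$k$ bound yields a failure probability at most $2\exp(-2\epsilon_{RK}^2 p^{12} N_{10}/k_{10})$; requiring this to be at most $\delta$ and rearranging gives precisely the stated threshold $p \geq (\log(2/\delta) k_{10}/(2\epsilon_{RK}^2 N_{10}))^{1/12}$.

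I do not expect a serious obstacle, since the $f_T$ are genuinely $\{0,1\}$-valued and the read-$k$ hypotheses hold cleanly; the only real work is the identification of the correct read parameter. The main conceptual step --- and what makes the bound sharp compared to Kim--Vu --- is noticing that $k_{10}$, the maximum number of $4$-cliques sharing an edge, is \emph{exactly} the read parameter, so the concentration depends on a local quantity rather than on global degree bounds. The generalization to the other ten profile coordinates (deferred to the Appendix) will be more delicate, because a given surviving subgraph of $\tilde{G}$ may arise from several different subgraphs of $G$, so the relevant functions are no longer simple products of surviving-edge indicators and the read parameter must be recomputed for each $F_i$.
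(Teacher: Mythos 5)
Your proposal is correct and follows essentially the same route as the paper's own proof: both express the sparsified count $Y_{10}$ as a sum of $N_{10}$ Boolean clique-indicator functions forming a read-$k_{10}$ family, apply the read-$k$ concentration bound of \cite{GavinskyReadK}, and rescale by $p^{6}$ (via $X_{10}=Y_{10}/p^{6}$) to solve for the threshold on $p$. The only cosmetic difference is that you substitute the absolute deviation $\lambda=\epsilon_{RK}p^{6}N_{10}$ in a single step, whereas the paper writes the bound first with deviation $\epsilon_{RK}N_{10}$ and then with $p^{6}\epsilon_{RK}N_{10}$; the resulting computation and final condition are identical.
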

\begin{proof}
Our proof relies on read-$k$ function families \cite{GavinskyReadK}, a recent characterization of dependencies among functions of random variables. See the Appendix for full details.
\end{proof}

Next, we state conditions under which our method outperforms the Kim and Vu concentration results~\cite{KimVu2000concentration}. Proof can be found in the Appendix:
\begin{cor}\label{KV_RKcomparison}
Let $G$ be a graph with $m$ edges. If $p = \Omega(1/\log m)$ and $\delta = \Omega(1/m)$, then read-$k$ provides better triangle sparsifier accuracy than Kim-Vu. If additionally $k_{10} \leq N_{10}^{5/6}$, then read-$k$ provides better $4$-clique sparsifier accuracy than Kim-Vu.
\end{cor}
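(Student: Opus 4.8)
The plan is to reduce the claim to a head-to-head comparison of two closed-form accuracy bounds: the read-$k$ bound of Theorem \ref{lem:cliqueRK} (together with its degree-$3$ triangle analogue, which the same argument produces in the Appendix) and the accuracy implied by the Kim--Vu polynomial concentration inequality \cite{KimVu2000concentration} applied to the \emph{same} subsampled counting polynomial. Both estimators have relative error that is invariant under the deterministic rescaling $X = Y/p^{e}$ (with $e=3$ for triangles, $e=6$ for $4$-cliques), so it suffices to compare the relative deviations of the raw counts $Y=\sum_{Q}\prod_{e\in Q}\xi_e$ around their means, where $\xi_e$ is the indicator that edge $e$ survives sampling.

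First I would invert the hypothesis of Theorem \ref{lem:cliqueRK} to solve for the guaranteed accuracy at a fixed sampling probability $p$, obtaining $\epsilon_{RK} = \sqrt{\log(2/\delta)\,k_{10}/(2p^{12}N_{10})}$ for $4$-cliques and the analogous $\epsilon_{RK} = \sqrt{\log(2/\delta)\,k_{3}/(2p^{6}N_{3})}$ for triangles. Next I would instantiate Kim--Vu on $Y$: compute the order-$j$ partial-derivative expectations $\mathcal{E}_j = \max_{|A|=j}\E{\partial_A Y}$ (so $\mathcal{E}_0 = p^{k}N$, $\mathcal{E}_1 = p^{k-1}k_{\mathrm{sub}}$ where $k_{\mathrm{sub}}$ is the max subgraphs through an edge, and the top derivative $\mathcal{E}_{k}=1$), identify the dominant terms $\mathcal{E}=\max_{j\ge 0}\mathcal{E}_j$ and $\mathcal{E}'=\max_{j\ge 1}\mathcal{E}_j$, and tune the free parameter $\lambda$ so that the Kim--Vu failure probability $O\!\left(e^{-\lambda+(k-1)\log m}\right)$ is at most $\delta$. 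This forces $\lambda = \Theta\!\left((k-1)\log m + \log(1/\delta)\right)$, hence $\lambda=\Theta(\log m)$ once $\delta=\Omega(1/m)$ and $\log n = \Theta(\log m)$, and yields a relative Kim--Vu error of the form $\epsilon_{KV}=a_k\,(\mathcal{E}'/\mathcal{E})^{1/2}\lambda^{k}$ with $a_k=8^{k}\sqrt{k!}$, matching the triangle-sparsifier analysis of \cite{Tsourakakis2011sparsifier}.

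With both expressions in hand, the comparison collapses to an inequality between prefactors. Forming $\epsilon_{RK}/\epsilon_{KV}$ in the regime where $\mathcal{E}=\mathcal{E}_0$ and $\mathcal{E}'=\mathcal{E}_1$, the common $(k_{\mathrm{sub}}/N)^{1/2}$ factors cancel, leaving $\epsilon_{RK}/\epsilon_{KV}=\Theta\!\left(\sqrt{\log(1/\delta)}\,/\,(a_k\lambda^{k}p^{(2k-1)/2})\right)$. Substituting $p=\Omega(1/\log m)$ and $\lambda=\Theta(\log m)$ makes both numerator and denominator $\Theta(\sqrt{\log m})$ for every $k$, so the large constants $a_k$ and the extra polylogarithmic slack tip the balance toward read-$k$; for the degree-$3$ triangle polynomial I expect this to hold with no side condition, which is the first assertion. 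The $4$-clique case is governed by degree $k=6$, and the clean cancellation above is valid only when the order-zero term $\mathcal{E}_0=p^{6}N_{10}$ dominates the first-order term $\mathcal{E}_1=p^{5}k_{10}$, i.e. $k_{10}\le p\,N_{10}$. The hypothesis $k_{10}\le N_{10}^{5/6}$ is a clean sufficient condition for this under $p=\Omega(1/\log m)$, since $N_{10}^{5/6}\le N_{10}/\log m$ for large graphs, and this is exactly what the second assertion adds.

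I expect the main obstacle to be the Kim--Vu bookkeeping together with the attendant regime analysis: one must evaluate all of the $\mathcal{E}_j$ for the degree-$6$ clique polynomial (the intermediate derivatives $\mathcal{E}_2,\dots,\mathcal{E}_5$ count cliques through fixed edge sets and must be shown not to overtake $\mathcal{E}_0$ or $\mathcal{E}_1$), and verify which of $\mathcal{E}_0,\dots,\mathcal{E}_{6}$ is the maximum as $p$, $k_{10}$, and $N_{10}$ vary. Pinning down $k_{10}\le N_{10}^{5/6}$ as the precise threshold separating the favorable regime from the one where Kim--Vu can match read-$k$ is the delicate step; the triangle case is easier only because its lower degree leaves more polylogarithmic room in the prefactor comparison, so that the analogous restriction is automatically absorbed by the hypotheses on $p$ and $\delta$.
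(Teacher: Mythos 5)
Your proposal is correct and takes essentially the same route as the paper: the paper likewise inverts Theorem \ref{lem:cliqueRK} to get $\epsilon_{RK}^2=\log(2/\delta)k_{10}/(2p^{12}N_{10})$, proves a standalone Kim--Vu $4$-clique lemma by computing the derivative expectations of the clique polynomial and setting $\lambda=\Theta(\log m)$ (yielding $\epsilon_{KV}^2=a_6^2\log^{12}(m^{5+\gamma})\max\{N_{10}^{-1/6},(k_{10}/N_{10})^{1/3}\}/p$), and then verifies $\epsilon_{KV}\ge\epsilon_{RK}$ by exactly your mechanism: with $p=\Theta(1/\log m)$ both accuracies scale as $\sqrt{\log m}$, so the constant $a_6=8^6\sqrt{6!}$ and the leftover polylog slack decide the comparison. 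The one discrepancy is your reading of the hypothesis $k_{10}\le N_{10}^{5/6}$: the paper does not use it to make $\mathcal{E}_0$ dominate $\mathcal{E}_1$ (your justification of that step silently assumes $N_{10}\ge\log^6 m$, which is not in the statement), but only to bound $\min\{k_{10}/N_{10}^{5/6},(k_{10}/N_{10})^{2/3}\}\le 1$ in the final rearrangement; moreover, since replacing $\mathbb{E}_{\ge 1}[Y_{10}]$ by the smaller quantity $\mathcal{E}_1$ only makes Kim--Vu look better, the careful regime analysis of $\mathcal{E}_2,\dots,\mathcal{E}_5$ that you flag as the main obstacle is unnecessary for the one-sided comparison --- the paper simply uses the crude bound $\mathbb{E}_{\ge 1}[Y_{10}]\le\max\{1,p^3k_{10}\}$ and the enormous slack absorbs everything.
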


We note that the asymptotic condition on $p$ in Corollary \ref{KV_RKcomparison} includes a constant term much less than $1$. This implies our concentration result is superior over all $p$ values of practical interest. While these bounds contain the quantities we wish to estimate, they provide guidelines for the performance of sampling heuristics. We also investigate this in Section \ref{sec:results} for some realistic graphs.

\section{Experiments}\label{sec:experiments}
Let us now describe the implementation and experimental results of our algorithm. We implement \prof\ on GraphLab v2.2 (PowerGraph) \cite{powergraphGAS2012} and measure its running time and accuracy on large input graphs.\footnote{Available at \url{http://github.com/eelenberg/4-profiles}} First, we show that edge sampling yields very good approximation results for global 4-profile counts and achieves substantial execution speedups and network traffic savings when multiple machines are in use.
Due to its distributed nature, we can show \prof\ runs substantially faster when using multiple CPU cores and/or machines. Notice that multicore and multiple machines can not speed up some centralized algorithms, \textit{e.g.}, $\orca$ \cite{Hocevar2014bio}, which we use as a baseline for our results. Note also that $\orca$ produces only a partial 4-subgraph count, \textit{i.e.}, it calculates only connected 4-subgraphs, while $\prof$ calculates all $17$ per vertex.\\
\textbf{The systems:}
We perform the experiments on two systems. The first system is a single powerful server, further referred to as Asterix. The server is equipped with 256 GB of RAM and two Intel Xeon E5-2699 v3 CPUs, 18 cores each. Since each core has two hardware threads, up to 72 logical cores are available to the GraphLab engine.
The second system is an EC2 cluster on AWS\footnote{Amazon Web Services - \url{http://aws.amazon.com}}. The cluster is comprised of 20 c3.8xlarge machines, each having 60 GB RAM and 32 virtual CPUs.\\
\textbf{The data:}
In our experiments we use two real graphs representing different datasets: social networks (LiveJournal: 4,846,609 vertices, 42,851,237 edges) and a WWW graph of Notre Dame (WEB-NOTRE: 325,729 vertices, 1,090,108 edges) \cite{snapnets}. 
Notice that the above graphs are originally directed, but since our work deals with undirected graphs, all duplicate edges (\textit{i.e.}, bi-directional) were removed and directionality is ignored.
\begin{figure}[ht]
\centering
\begin{subfigure}{0.49\linewidth}
\includegraphics[width=\linewidth]{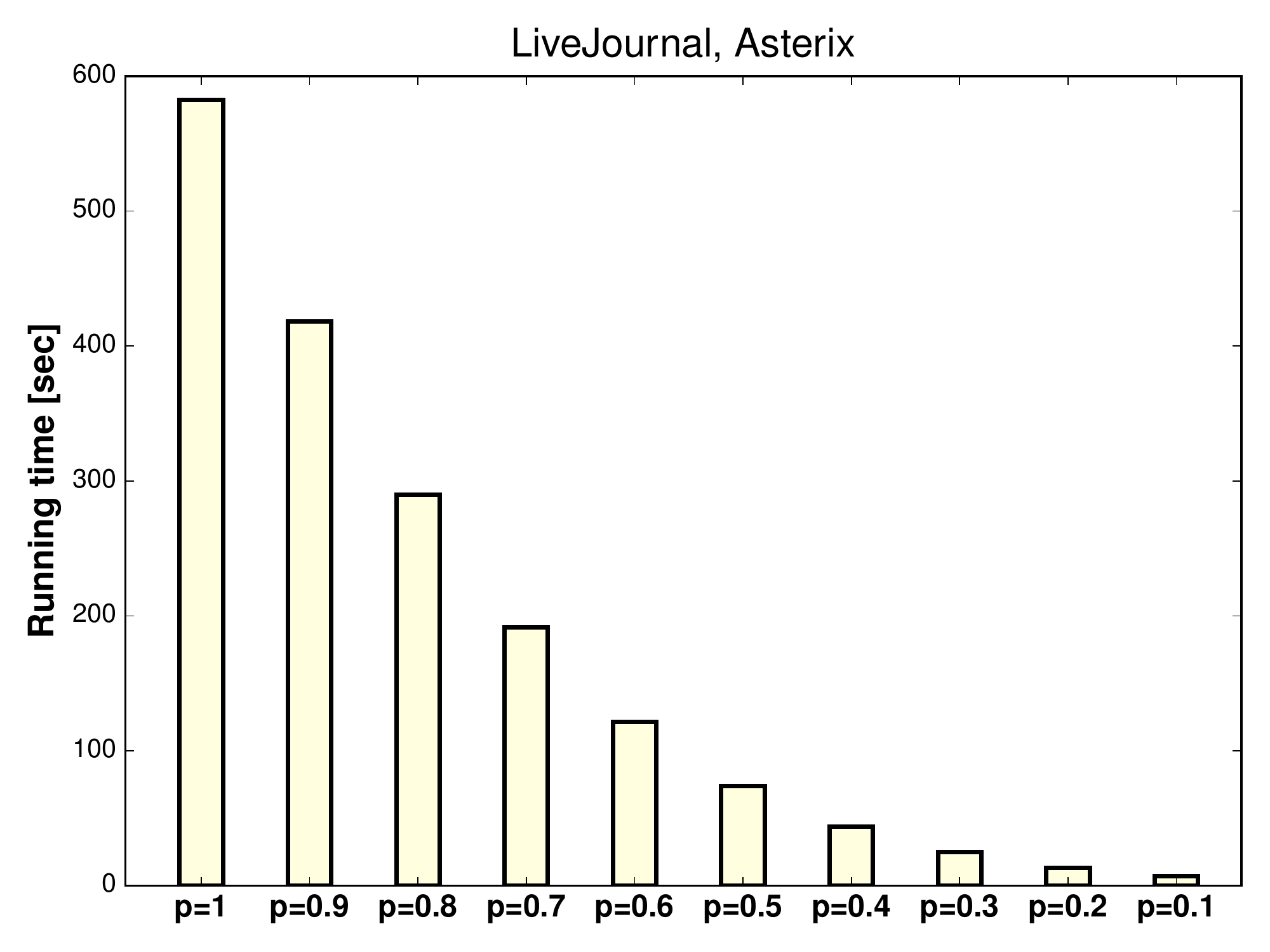}
\caption{}
\label{fig:accuracyTIME}
\end{subfigure}
\begin{subfigure}{0.49\linewidth}
\includegraphics[width=\linewidth]{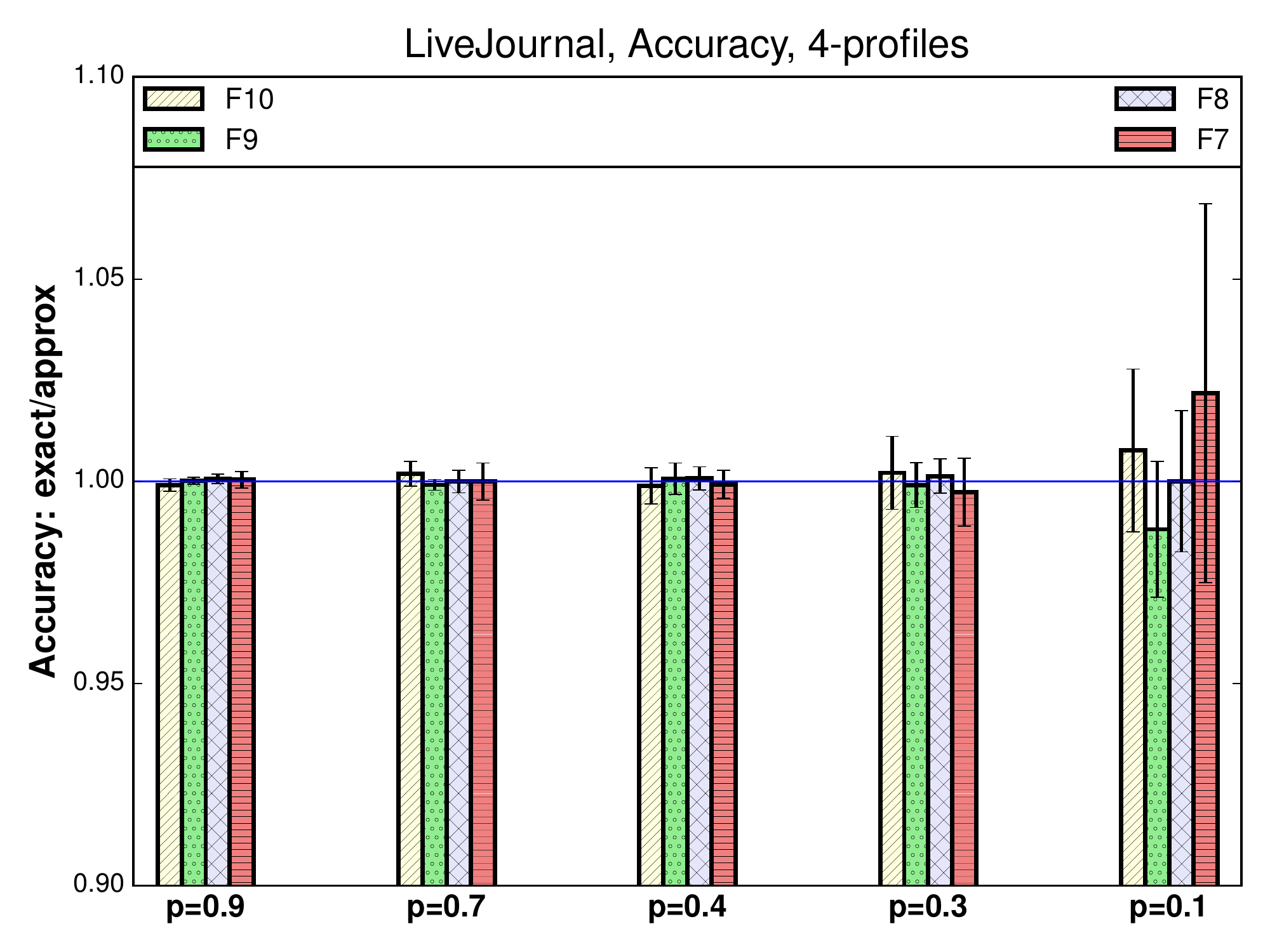}
\caption{}
\label{fig:accuracyBARS}
\end{subfigure}
\vspace{-2mm}
\caption{\label{fig:accuracy} LiveJournal graph, Asterix system. All the results are averaged over $10$ iterations.
(a) -- Running time as a function of sampling probability.
(b) -- Accuracy of the $F_7 - F_{10}$ global counts, measured as ratio of the exact count and the estimated count. 
}
\vspace{-4mm}
\end{figure}

\begin{figure}[h]
\centering
\begin{subfigure}{0.49\linewidth}
\includegraphics[width=\linewidth]{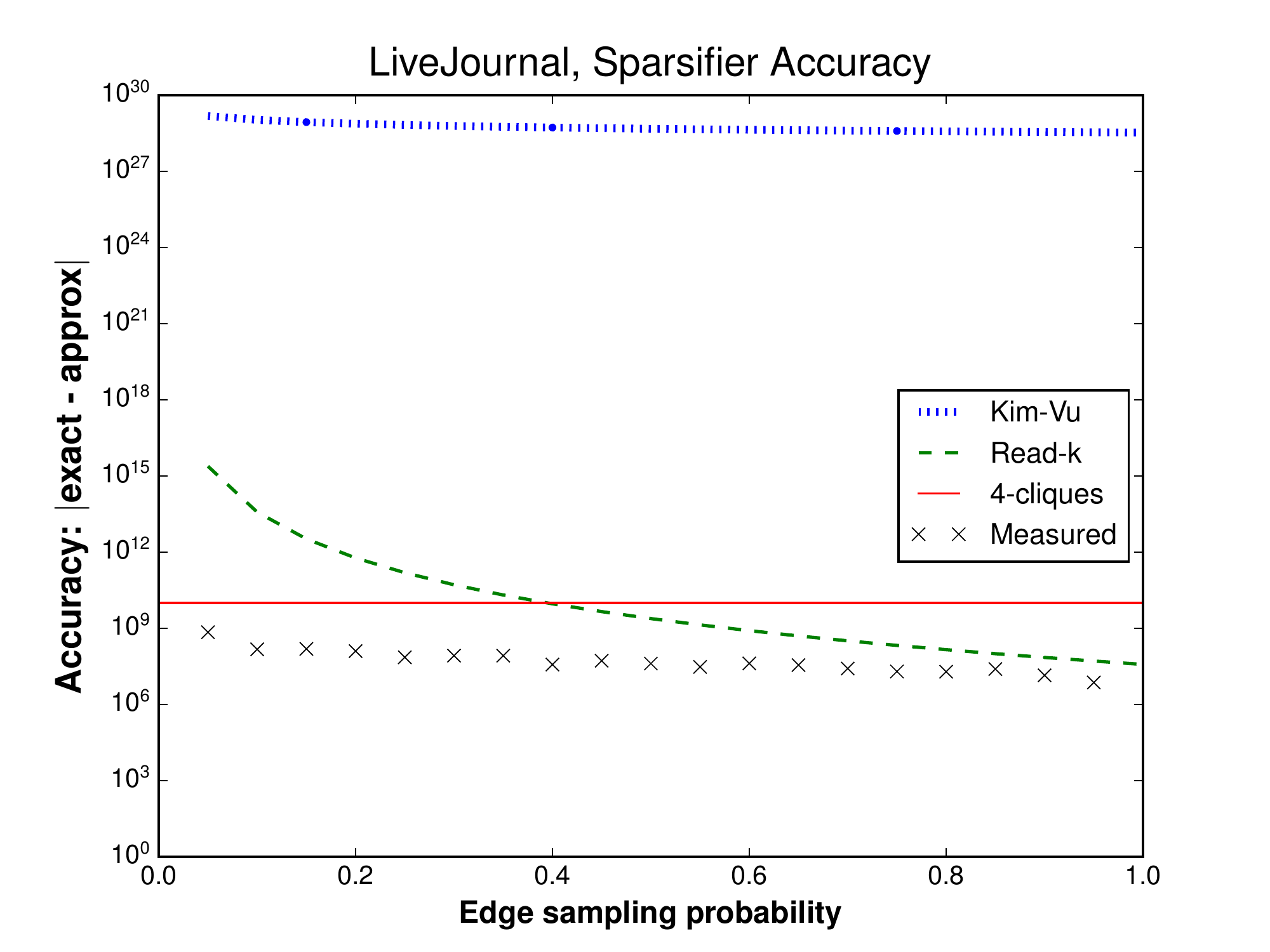}
\caption{}
\label{fig:accuracy1ACC}
\end{subfigure}
\begin{subfigure}{0.49\linewidth}
\includegraphics[width=\linewidth]{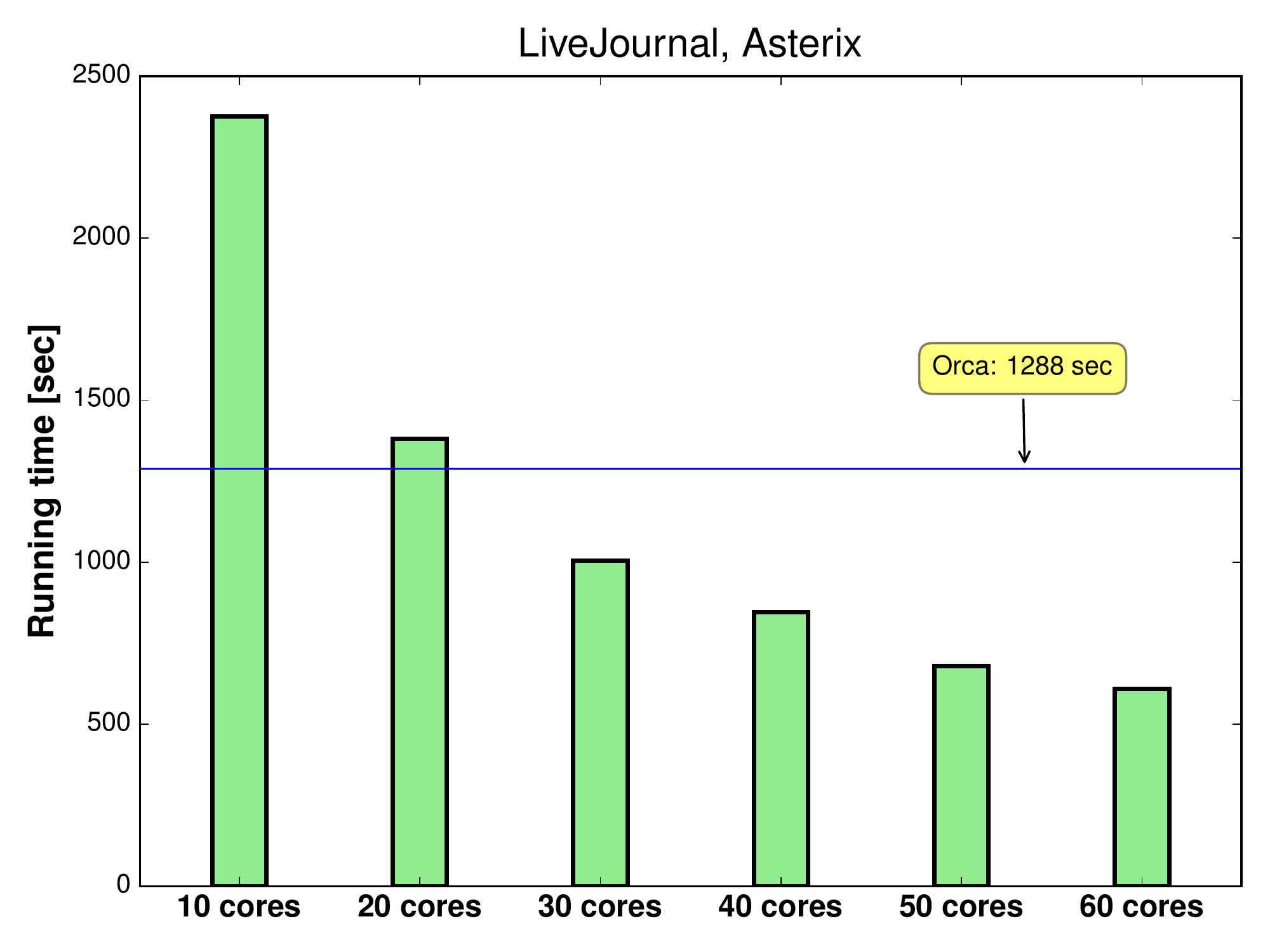}
\caption{}
\label{fig:accuracy1CORES}
\end{subfigure}
\vspace{-2mm}
\caption{\label{fig:accuracy1}LiveJournal graph, Asterix system. All the results are averaged over $10$ iterations.
(a) -- Comparison of $4$-clique sparsifier concentration bounds with accuracy measured in edge sampling experiments. 
(b) -- Comparison of running times of $\orca$ and our exact $\prof$ algorithm. Clearly, $\prof$ benefits from the use of multiple cores.
}
\vspace{-3mm}
\end{figure}

\subsection{Results}\label{sec:results}
\noindent\textbf{Accuracy:} 
The first result is that our edge sampling approach greatly improves running time while maintaining a very good approximation of the \emph{global} 4-profile. In Figure \ref{fig:accuracyTIME} we can see that the running time decreases drastically when the sampling probability decreases. At the same time, Figure \ref{fig:accuracyBARS} shows that the mean ratio of true to estimated global $4$-profiles is within $\pm 2.5\%$. Similar to \cite{Jha2014}, which uses a more complex sampling scheme to count connected $4$-subgraphs, this ratio is usually much less than 1\%. We show here only profiles $F_7 - F_{10}$ since their counts are the smallest and were observed to have the lowest accuracy. In Figure \ref{fig:accuracy1ACC} we compare theoretical concentration bounds on a logarithmic scale and show the benefit of Theorem \ref{lem:cliqueRK}. While the guarantees provided by Kim-Vu \cite{KimVu2000concentration} bounds are very loose (the additive error is bounded by numbers which are orders of magnitude larger than the true value), the read-$k$ approach is much closer to the measured values. We can see that for large sampling probabilities ($p \geq 0.5$), the measured error at most $2$ orders of magnitude smaller than the value prediced by Theorem \ref{lem:cliqueRK}. \\
\noindent\textbf{2-hop histogram:} Now we compare two methods of calculating the left hand side of \eqref{eq:nbhPivot} from Section \ref{subsec:nh}.
We show that a simple implementation in which a vertex gathers its full 2-hop neighborhood (\textit{i.e.}, IDs of its neighbors' neighbors) is much less efficient than the \emph{two-hop histogram} approach used in $\prof$ (see Section \ref{subsec:nh}). 
In Figures \ref{fig:2hopTimeAWS} and \ref{fig:2hopNetAWS} we can see that the histogram approach is an order of magnitude faster for various numbers of machines, and that its network requirements are up to $5$x less than that of the simple implementation. Moreover, our algorithm could handle much larger graphs while the simple implementation ran out of memory.\\
\noindent\textbf{Running time:} 
Finally, we show that $\prof$ can run much faster than the current state of the art graphlet counting implementations. The algorithm and the GraphLab platform on which it runs are both distributed in nature. The latter allows $\prof$ to exploit multiple cores on a single machine as well as a cluster of machines. Figure \ref{fig:accuracy1CORES} shows running time as a function of CPU cores. We compare this result to the running time of a single core, C++ implementation of $\orca$ \cite{Hocevar2014bio}. Our $\prof$ algorithm becomes faster after only $25$ cores and is $2$x faster using $60$ cores. Moreover, $\prof$ allows scaling to a large number of machines. In Figure \ref{fig:2hopTimeLJ} we can see how the running time for the LiveJournal graph decreases when the number of machines increases. Since $\orca$ cannot benefit from multiple machines, we see that $\prof$ runs up to $12$x faster than $\orca$. This gap widens as the cluster grows larger. In \cite{MilinkovicGPU2015}, the authors implemented a GPU version of $\orca$ using CUDA. However, the reported speedup is about $2$x which is much less than we show here on the AWS cluster (see Figure \ref{fig:2hopTimeLJ} for $p=1$).
We also note a substantial running time benefit of the sampling approach for \emph{global} 4-profiles. In Figures \ref{fig:2hopTimeLJ} and \ref{fig:2hopNetLJ}, we see that with $p=0.1$ we can achieve order of magnitude improvements in both speed and network traffic. This sampling probability maintains very good accuracy, as shown in Figure \ref{fig:accuracyBARS}.

\begin{figure}[h]
\centering
\begin{subfigure}{0.49\linewidth}
\includegraphics[width=\linewidth]{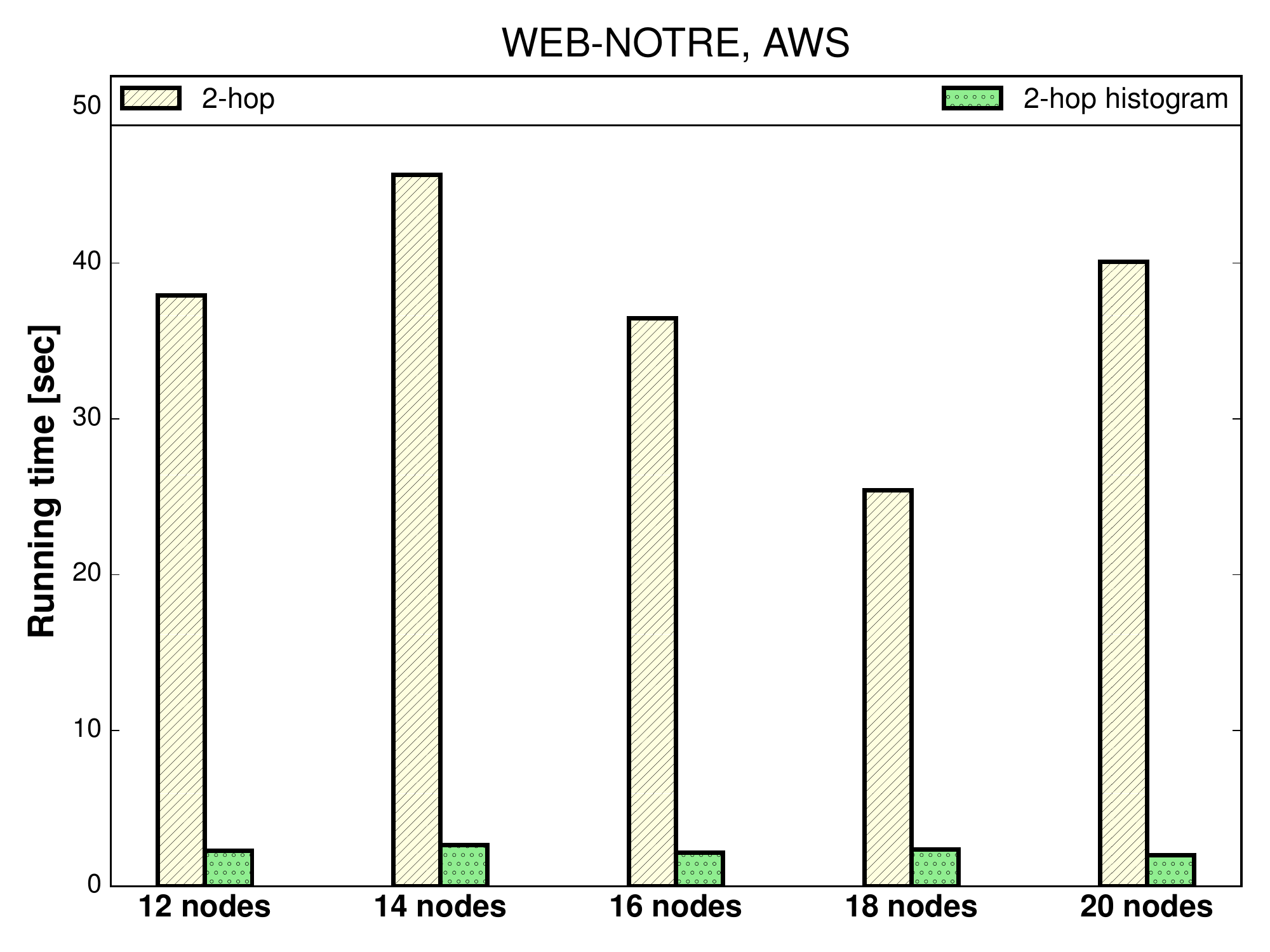}
\caption{}
\label{fig:2hopTimeAWS}
\end{subfigure}
\begin{subfigure}{0.49\linewidth}
\includegraphics[width=\linewidth]{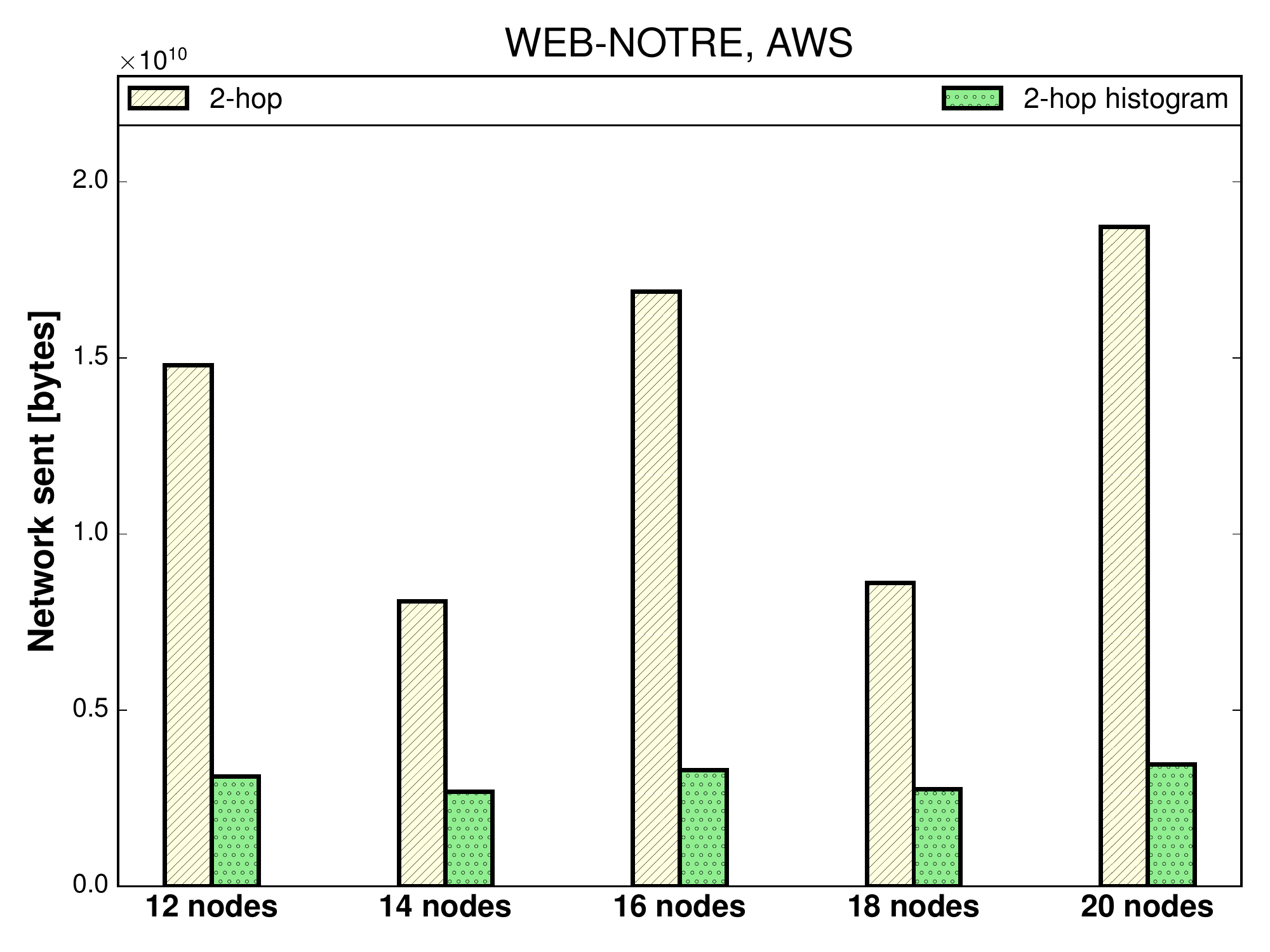}
\caption{}
\label{fig:2hopNetAWS}
\end{subfigure} \\
\begin{subfigure}{0.49\linewidth}
\includegraphics[width=\linewidth]{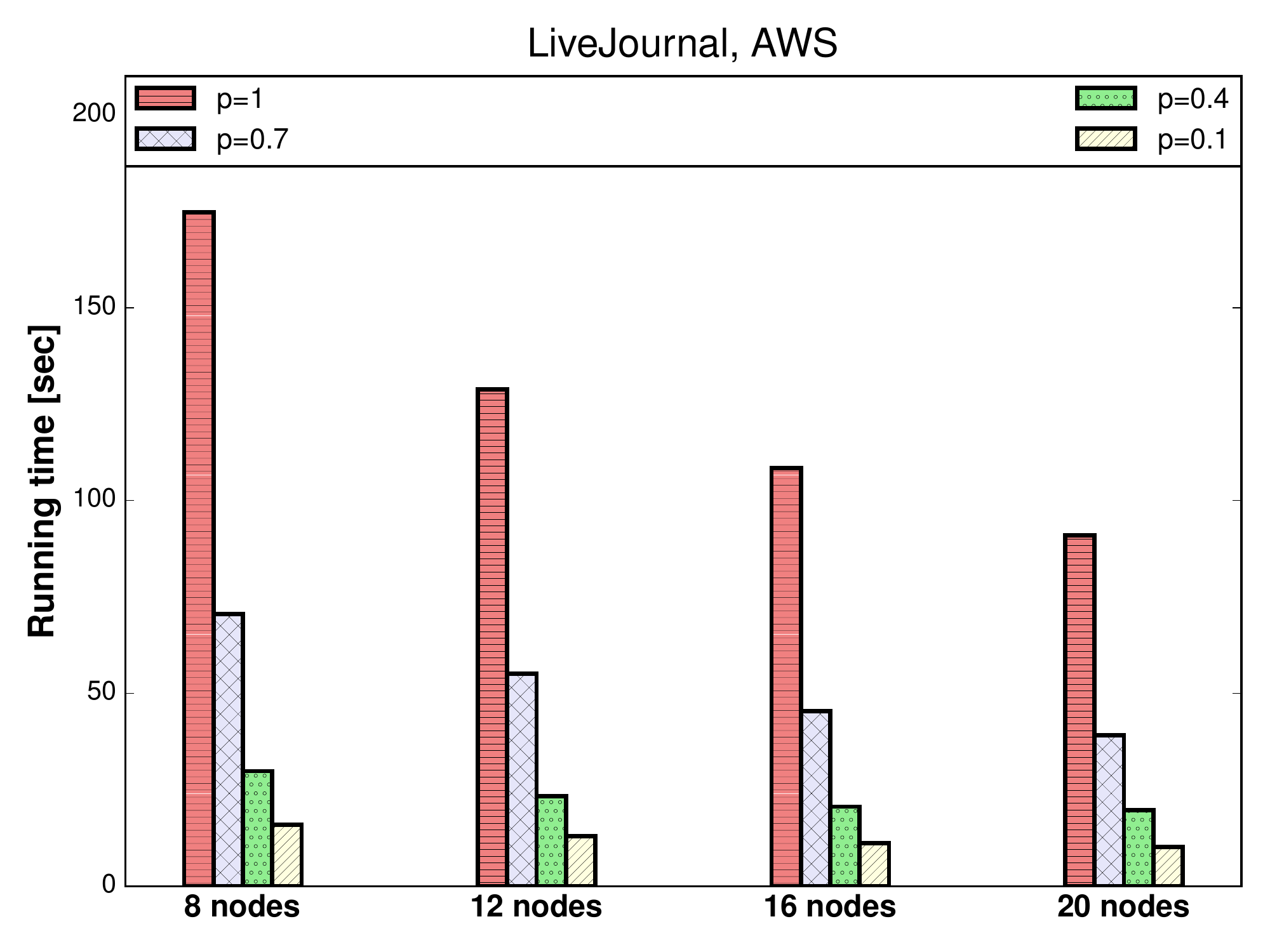}
\caption{}
\label{fig:2hopTimeLJ}
\end{subfigure}
\begin{subfigure}{0.49\linewidth}
\includegraphics[width=\linewidth]{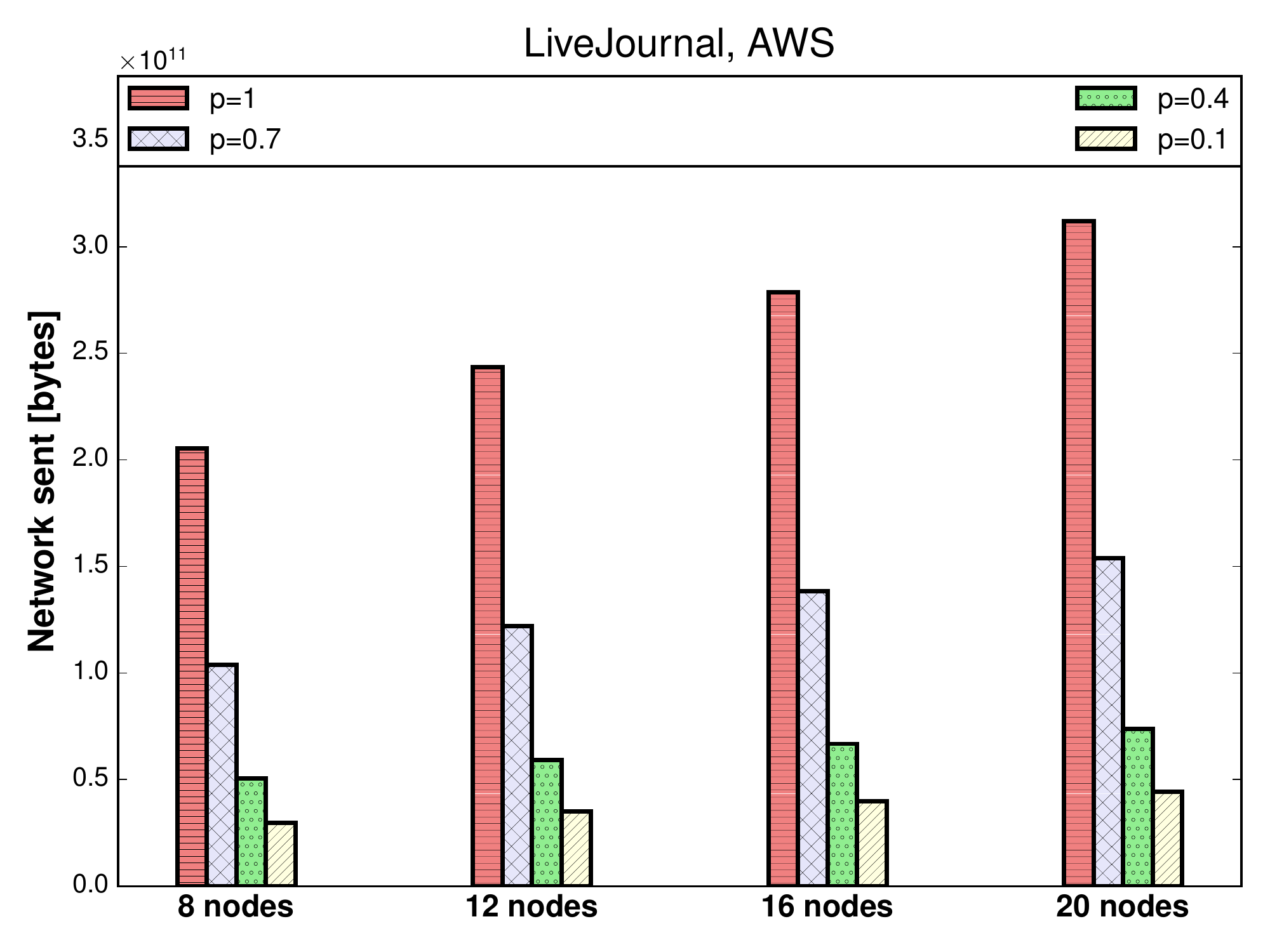}
\caption{}
\label{fig:2hopNetLJ}
\end{subfigure}
\caption{\label{fig:2hopandmore}AWS cluster of up to 20 machines (nodes). 
(a,b) -- Running time and network usage comparing naive $2$-hop implementation and $2$-hop histogram approach on the Notre Dame web graph. 
(c,d) -- Running time and network usage of $\prof$ for various number of compute nodes and sampling probability $p$, on the LiveJournal graph.
All results are averaged over $10$ iterations.}
\end{figure}

\section{Conclusions}
 We introduced a novel distributed algorithm for estimating $4$-profiles of large graphs. We relied on two theoretical results that can be of independent interest: that $4$-profiles can be estimated with limited $2$-hop information and that randomly erasing edges gives sharper approximation compared to previous analysis. We showed that our scheme outperforms the previous state of the art and can exploit cloud infrastructure to scale.

\bibliographystyle{IEEEtran}
\bibliography{IEEEabrv,refs}

\begin{thebibliography}{10}
\providecommand{\url}[1]{#1}
\csname url@samestyle\endcsname
\providecommand{\newblock}{\relax}
\providecommand{\bibinfo}[2]{#2}
\providecommand{\BIBentrySTDinterwordspacing}{\spaceskip=0pt\relax}
\providecommand{\BIBentryALTinterwordstretchfactor}{4}
\providecommand{\BIBentryALTinterwordspacing}{\spaceskip=\fontdimen2\font plus
\BIBentryALTinterwordstretchfactor\fontdimen3\font minus
  \fontdimen4\font\relax}
\providecommand{\BIBforeignlanguage}[2]{{%
\expandafter\ifx\csname l@#1\endcsname\relax
\typeout{** WARNING: IEEEtran.bst: No hyphenation pattern has been}%
\typeout{** loaded for the language `#1'. Using the pattern for}%
\typeout{** the default language instead.}%
\else
\language=\csname l@#1\endcsname
\fi
#2}}
\providecommand{\BIBdecl}{\relax}
\BIBdecl

\bibitem{becchetti08}
L.~Becchetti, P.~Boldi, C.~Castillo, and A.~Gionis, ``{Efficient Semi-Streaming
  Algorithms for Local Triangle Counting in Massive Graphs},'' in \emph{KDD},
  2008.

\bibitem{OCallaghan2012}
D.~O'Callaghan, M.~Harrigan, J.~Carthy, and P.~Cunningham, ``{Identifying
  Discriminating Network Motifs in YouTube Spam},'' Feb. 2012.

\bibitem{Ugander2013}
J.~Ugander, L.~Backstrom, and J.~Kleinberg, ``{Subgraph Frequencies: Mapping
  the Empirical and Extremal Geography of Large Graph Collections},'' in
  \emph{WWW}, 2013.

\bibitem{przPPIorig}
N.~Przulj, ``{Biological Network Comparison Using Graphlet Degree
  Distribution},'' \emph{Bioinformatics}, vol.~23, no.~2, pp. 177--183, 2007.

\bibitem{Borgs2006}
C.~Borgs, J.~Chayes, and K.~Vesztergombi, ``{Counting Graph Homomorphisms},''
  \emph{Topics in Discrete Mathematics}, pp. 315--371, 2006.

\bibitem{lovasz2012large}
L.~Lov{\'a}sz, \emph{Large Networks and Graph Limits}.\hskip 1em plus 0.5em
  minus 0.4em\relax American Mathematical Soc., 2012, vol.~60.

\bibitem{Williams2014mod}
V.~V. Williams, J.~Wang, R.~Williams, and H.~Yu, ``{Finding Four-Node Subgraphs
  in Triangle Time},'' \emph{SODA}, pp. 1671--1680, 2014.

\bibitem{Hocevar2014bio}
T.~Ho\v{c}evar and J.~Dem\v{s}ar, ``{A Combinatorial Approach to Graphlet
  Counting.}'' \emph{Bioinformatics}, vol.~30, no.~4, pp. 559--65, Feb. 2014.

\bibitem{ourpaperKDD}
E.~R. Elenberg, K.~Shanmugam, M.~Borokhovich, and A.~G. Dimakis, ``{Beyond
  Triangles: A Distributed Framework for Estimating 3-profiles of Large
  Graphs},'' in \emph{KDD}, 2015, pp. 229--238.

\bibitem{TsourakakisSubsamp2009}
C.~E. Tsourakakis, U.~Kang, G.~L. Miller, and C.~Faloutsos, ``{DOULION:
  Counting Triangles in Massive Graphs with a Coin},'' in \emph{SIGKDD}, 2009.

\bibitem{Tsourakakis2011sparsifier}
C.~E. Tsourakakis, M.~Kolountzakis, and G.~L. Miller, ``{Triangle
  Sparsifiers},'' \emph{Journal of Graph Theory and Applications}, vol.~15,
  no.~6, pp. 703--726, 2011.

\bibitem{KimVu2000concentration}
J.~H. Kim and V.~H. Vu, ``{Concentration of Multivariate Polynomials and Its
  Applications},'' \emph{Combinatorica}, vol.~20, no.~3, pp. 417--434, 2000.

\bibitem{GavinskyReadK}
D.~Gavinsky, S.~Lovett, M.~Saks, and S.~Srinivasan, ``{A Tail Bound for Read-k
  Families of Functions},'' \emph{Random Structures \& Algorithms}, vol.~47,
  no.~1, pp. 99--108, 2015.

\bibitem{powergraphGAS2012}
J.~E. Gonzalez, Y.~Low, H.~Gu, D.~Bickson, and C.~Guestrin, ``{PowerGraph:
  Distributed Graph-Parallel Computation on Natural Graphs},'' in \emph{10th
  USENIX Symposium on Operating Systems Design and Implementation (OSDI)},
  2012, pp. 17--30.

\bibitem{Schank2007}
T.~Schank, ``{Algorithmic Aspects of Triangle-Based Network Analysis},'' Ph.D.
  dissertation, 2007.

\bibitem{Satish2014}
N.~Satish, N.~Sundaram, M.~A. Patwary, J.~Seo, J.~Park, M.~A. Hassaan,
  S.~Sengupta, Z.~Yin, and P.~Dubey, ``{Navigating the Maze of Graph Analytics
  Frameworks using Massive Graph Datasets},'' in \emph{SIGMOD}, 2014, pp.
  979--990.

\bibitem{seshadri2012wedge}
C.~Seshadhri, A.~Pinar, and T.~G. Kolda, ``{Triadic Measures on Graphs: The
  Power of Wedge Sampling},'' in \emph{Proceedings of the SIAM Conference on
  Data Mining}, 2013, pp. 10--18.

\bibitem{seshadhriSublinearFOCS2015}
T.~Eden, A.~Levi, D.~Ron, and C.~Seshadhri, ``{Approximately Counting Triangles
  in Sublinear Time},'' in \emph{FOCS}, 2015, pp. 614--633.

\bibitem{Ahmed2014}
N.~K. Ahmed, N.~Duffield, J.~Neville, and R.~Kompella, ``{Graph Sample and
  Hold: A Framework for Big-Graph Analytics},'' in \emph{KDD}, 2014.

\bibitem{Jha2014}
M.~Jha, C.~Seshadhri, and A.~Pinar, ``{Path Sampling: A Fast and Provable
  Method for Estimating 4-Vertex Subgraph Counts},'' in \emph{WWW}, 2015, pp.
  495--505.

\bibitem{Finocchi2014}
I.~Finocchi, M.~Finocchi, and E.~G. Fusco, ``{Clique Counting in MapReduce:
  Algorithms and Experiments},'' \emph{ACM Journal of Experimental
  Algorithmics}, vol.~20, no.~1, 2015.

\bibitem{Janson2002}
S.~Janson, K.~Oleszkiewicz, and A.~Ruci{\'n}ski, ``{Upper Tails for Subgraph
  Counts in Random Graphs},'' \emph{Israel Journal of Mathematics}, vol. 142,
  no.~1, pp. 61--92, 2004.

\bibitem{Kowaluk2013eqs}
M.~Kowaluk, A.~Lingas, and E.-M. Lundell, ``{Counting and Detecting Small
  Subgraphs via Equations},'' \emph{SIAM Journal of Discrete Mathematics},
  vol.~27, no.~2, pp. 892--909, 2013.

\bibitem{Ahmed2015ICDM}
N.~K. Ahmed, J.~Neville, R.~A. Rossi, and N.~Duffield, ``{Efficient Graphlet
  Counting for Large Networks},'' in \emph{IEEE International Conference on
  Data Mining}, 2015.

\bibitem{Shervashidze2009Learning}
N.~Shervashidze, K.~Mehlhorn, and T.~H. Petri, ``{Efficient Graphlet Kernels
  for Large Graph Comparison},'' in \emph{Proceedings of the 20th International
  Conference on Artificial Intelligence and Statistics}, 2009, pp. 488--495.

\bibitem{Fei2014graphletMRI}
F.~Fei, B.~Jie, and D.~Zhang, ``{Frequent and Discriminative Subnetwork Mining
  for Mild Cognitive Impairment Classification},'' \emph{Brain Connectivity},
  vol.~4, no.~5, pp. 347--360, Jun. 2014.

\bibitem{LavAppFullProfiles}
L.~R. Varshney, B.~L. Chen, E.~Paniagua, D.~H. Hall, and D.~B. Chklovskii,
  ``{Structural Properties of the Caenorhabditis elegans Neuronal Network},''
  \emph{PLoS Computational Biology}, vol.~7, no.~2, 2011.

\bibitem{MilinkovicGPU2015}
A.~Milinkovi\'{c}, S.~Milinkovi\'{c}, and L.~Lazic\'{c}, ``{A Contribution to
  Acceleration of Graphlet Counting},'' in \emph{Infoteh Jahorina Symposium},
  vol.~14, no. March, 2015, pp. 741--745.

\bibitem{alonMemory2002}
N.~Alon, Y.~Matias, and M.~Szegedy, ``{The Space Complexity of Approximating
  the Frequency Moments},'' in \emph{STOC}, 1996, pp. 20--29.

\bibitem{snapnets}
J.~Leskovec and A.~Krevl, ``{SNAP Datasets}: {Stanford} large network dataset
  collection,'' \url{http://snap.stanford.edu/data}, Jun. 2014.

\end{thebibliography}

\newpage
\appendix
\section{Appendix}
\subsection{Proof of Theorem \ref{lem:cliqueRK}}
\begin{proof}
Rather than Lipschitz bounding the value of each partial derivative, as in \cite{KimVu2000concentration,Tsourakakis2011sparsifier,ourpaperKDD}, our main technical tool \cite{GavinskyReadK} benefits from the fact that each first partial derivative is sparse in the number of edge variables:

\begin{defin}[Read-$k$ Families]
Let $X_1, \ldots, X_m$ be independent random variables. For $j \in [r]$, let $P_j \subseteq [m]$ and let $f_j$ be a Boolean function of $\{X_i\}_{i \in P_j}$. Assume that $|\{j | i \in P_j\}| \leq k$ for every $i \in [m]$. Then the random variables $Z_j = f_j(\{X_i\}_{i \in P_j})$ are called a read-$k$ family.
\end{defin}

Each variable only affects $k$ of the $r$ Boolean functions. Let $G$ be a graph with $N_{10}$ $4$-cliques and a maximum of $k_{10}$ $4$-cliques sharing a common edge. The corresponding $4$-clique estimator $X_{10}$ follows this exact structure. Each edge sampling variable appears in at most $k_{10}$ of the $N_{10}$ terms. We now state the main result required for our analysis. Note that when applied to estimating the number of $4$-cliques, the bound is a function of $k_{10}$ and $N_{10}$ independent of the number of edges. Therefore, it is much stronger than arguments involving Lipschitz bounded functions such as McDiarmid's inequality.

\begin{prop}[Concentration of Read-$k$ Sums \cite{GavinskyReadK}]\label{thm:readk}
Let $Z_1, \ldots, Z_r$ be a family of read-$k$ indicator variables with $\mathbb{P}(Z_i = 1) = p_i$, and let $p$ be the average of $p_1, \ldots , p_r$. Then for any $\epsilon > 0$,
\begin{align*}
\mathbb{P}\left(\sum_{i=1}^r Z_i \geq (p + \epsilon)r \right) \leq \exp \left(- D(p + \epsilon \parallel p) \frac{r}{k}\right) \leq \exp \left( -2 \epsilon^2 \frac{r}{k} \right) \\
\mathbb{P}\left(\sum_{i=1}^r Z_i \leq (p - \epsilon)r \right) \leq  \exp \left(- D(p - \epsilon \parallel p) \frac{r}{k} \right) \leq \exp \left( -2 \epsilon^2 \frac{r}{k} \right) ,
\end{align*}
where $D(x\parallel y) = x \log\left(\frac{x}{y}\right) + (1-x)\log\left(\frac{1-x}{1-y}\right)$ is the Kullback-Leibler divergence of $x$ and $y$.
\end{prop}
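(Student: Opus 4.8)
\textbf{The plan is} to run the Cram\'er--Chernoff exponential moment method, arranging matters so that the read-$k$ hypothesis enters through a single multiplicative bound on the moment generating function (MGF) of $\sum_j Z_j$. Fix $\lambda > 0$; Markov's inequality applied to $e^{\lambda \sum_j Z_j}$ gives
\begin{align*}
\mathbb{P}\left(\sum_{j=1}^r Z_j \geq (p+\epsilon)r\right) \leq e^{-\lambda(p+\epsilon)r}\, \mathbb{E}\left[e^{\lambda \sum_{j} Z_j}\right].
\end{align*}
If the $Z_j$ were independent the MGF would factor as $\prod_j \mathbb{E}[e^{\lambda Z_j}]$; the whole content of the statement is that read-$k$ buys a \emph{fractional} version of this factorization, with the effective number of independent summands degraded from $r$ to $r/k$.

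\textbf{Key step (the main obstacle).} I would establish the sub-multiplicative MGF bound
\begin{align*}
\mathbb{E}\left[e^{\lambda \sum_{j} Z_j}\right] \leq \prod_{j=1}^r \left(\mathbb{E}\left[e^{\lambda k Z_j}\right]\right)^{1/k}.
\end{align*}
This is the only place the hypothesis is used and is the crux of the argument. View the supports $P_1, \dots, P_r \subseteq [m]$ as hyperedges on the independent coordinates $X_1, \dots, X_m$: the read-$k$ condition says every coordinate lies in at most $k$ of the $P_j$, so the constant weights $\theta_j = 1/k$ form a fractional cover with $\sum_{j : i \in P_j}\theta_j \leq 1$ for every $i$. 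Since the $X_i$ are independent the underlying law is a product measure, and Finner's generalized H\"older inequality yields $\mathbb{E}[\prod_j g_j] \leq \prod_j (\mathbb{E}[g_j^{1/\theta_j}])^{\theta_j}$ for nonnegative $g_j$ depending only on the coordinates in $P_j$. Taking $g_j = e^{\lambda Z_j}$ and $\theta_j = 1/k$ produces exactly the displayed bound. The cited reference proves an equivalent statement by an information-theoretic route; I expect that reproving precisely this fractional-H\"older special case (or the Shearer-type entropy inequality underlying it) will be the only genuinely delicate part, everything downstream being routine.

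\textbf{Finishing.} Because each $Z_j \in \{0,1\}$ has mean $p_j$, we have $\mathbb{E}[e^{\lambda k Z_j}] = 1 + p_j(e^{\lambda k}-1)$, and since $p \mapsto \log(1 + p(e^{\lambda k}-1))$ is concave, Jensen's inequality over the average $p = \frac1r\sum_j p_j$ gives $\mathbb{E}[e^{\lambda \sum_j Z_j}] \leq (1 + p(e^{\lambda k}-1))^{r/k}$. Substituting $\mu = \lambda k$ collapses the Chernoff estimate into
\begin{align*}
\mathbb{P}\left(\sum_{j} Z_j \geq (p+\epsilon)r\right) \leq \left[e^{-\mu(p+\epsilon)}\left(1 + p(e^{\mu}-1)\right)\right]^{r/k},
\end{align*}
whose bracketed factor is exactly the single-Bernoulli$(p)$ Chernoff expression. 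Optimizing over $\mu > 0$ makes the bracket equal to $e^{-D(p+\epsilon\,\|\,p)}$, which raised to $r/k$ is the claimed upper tail; Pinsker's inequality $D(p+\epsilon\,\|\,p) \geq 2\epsilon^2$ then gives the cleaner $\exp(-2\epsilon^2 r/k)$ form. The lower tail follows by applying the identical argument to the complementary read-$k$ family $1 - Z_j$ (means $1-p_j$, average $1-p$) and invoking $D(1-q\,\|\,1-p) = D(q\,\|\,p)$.
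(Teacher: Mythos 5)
The paper never proves this proposition: it is imported verbatim from \cite{GavinskyReadK} as an external tool, so there is no internal proof to compare against. Judged on its own terms, your derivation is correct, and it in fact mirrors the structure of the original Gavinsky--Lovett--Saks--Srinivasan argument. The crux in both cases is the sub-multiplicative MGF bound $\mathbb{E}\bigl[e^{\lambda \sum_j Z_j}\bigr] \leq \prod_{j=1}^r \bigl(\mathbb{E}[e^{\lambda k Z_j}]\bigr)^{1/k}$; you obtain it as the special case of Finner's generalized H\"older inequality with the constant fractional cover $\theta_j = 1/k$ (legitimate here: the $X_i$ are independent, so the underlying law is a product measure, and the read-$k$ condition gives $\sum_{j : i \in P_j} \theta_j \leq 1$ for every coordinate $i$), whereas \cite{GavinskyReadK} proves the equivalent correlation inequality $\mathbb{E}[\prod_j f_j] \leq \prod_j \mathbb{E}[f_j^k]^{1/k}$ for read-$k$ families by an entropy/Shearer-type route. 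Citing Finner rather than reproving it is a fair move for a named classical theorem, and you correctly flag it as the only non-routine ingredient. Your downstream steps all check out: $\mathbb{E}[e^{\mu Z_j}] = 1 + p_j(e^{\mu}-1)$ for Bernoulli $Z_j$; the Jensen step is valid since $t \mapsto \log\bigl(1 + t(e^{\mu}-1)\bigr)$ has second derivative $-(e^{\mu}-1)^2/(1+t(e^{\mu}-1))^2 < 0$; the substitution $\mu = \lambda k$ collapses the exponent to $r/k$ times the single-Bernoulli Chernoff expression, whose infimum over $\mu > 0$ is $e^{-D(p+\epsilon \parallel p)}$; the Pinsker-type bound $D(p+\epsilon \parallel p) \geq 2\epsilon^2$ is standard; and the lower tail via the complementary family $1 - Z_j$ (still read-$k$ with the same supports $P_j$) together with $D(1-p+\epsilon \parallel 1-p) = D(p-\epsilon \parallel p)$ is exactly right. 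The only nitpicks are degenerate cases you should dispatch in a sentence: the optimizer $\mu^{*} = \log\frac{(p+\epsilon)(1-p)}{p(1-p-\epsilon)}$ requires $0 < p$ and $p + \epsilon < 1$, and when $p + \epsilon \geq 1$ or $p = 0$ the tail event is empty or the claim trivial.
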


Let $Y_{10} = \sum_{\boxtimes(a,b,c,d) \in {\cal H}_{10}} t_{ab}t_{bc} t_{cd} t_{da} t_{ac} t_{bd}$. Then
\begin{align*}
\mathbb{P} \left( \lvert Y_{10} - p^6N_{10} \rvert \geq \epsilon_{RK} N_{10} \right) &\leq 2 \exp \left( -\frac{2 \epsilon_{RK}^2 N_{10}}{k_{10}} \right) \\
\Rightarrow \mathbb{P} \left( \lvert X_{10} - N_{10} \rvert \geq \epsilon_{RK} N_{10} \right) &= \mathbb{P} \left( \lvert Y_{10} - p^6N_{10} \rvert \geq p^6\epsilon_{RK} N_{10} \right) \leq 2 \exp \left( - \frac{2 p^{12} \epsilon_{RK}^2 N_{10}}{k_{10}} \right).
\end{align*}
The claim follows by setting the right hand side less than $\delta$ and solving for $p$.
\end{proof}

\subsection{Proof of Corollary \ref{KV_RKcomparison}}
\begin{proof}
We prove this result for the case of $4$-cliques only because the case for triangles is similar. First we must derive a similar $4$-clique concentration bound using the techniques in \cite{KimVu2000concentration,Tsourakakis2011sparsifier}.

\begin{lem}\label{lem:cliqueKV}
Let $G$ be a graph with $m$ edges and $N_{10}$ cliques, and $k_{10}$ be the maximum number of $4$-cliques sharing a common edge. Let $a_6=8^6\sqrt{6!}$, $0 < p \leq 1$, and $\epsilon_{KV} > 0$. Let $X_{10}$ be the $4$-clique estimate obtained from subsampling each edge with probability $p$. If
\begin{align}
\frac{p}{\max \left\{ \sqrt[6]{1/N_{10}},\sqrt[3]{k_{10}/N_{10}} \right\} } \geq \frac{a_6^2\log^{12}(m^{5+\gamma})}{\epsilon_{KV}^2}, \label{eq:cliqueKVcond}
\end{align}
then $\left| N_{10} -  X_{10} \right| \leq \epsilon_{KV} N_{10}$ with probability at least $1 - \frac{1}{m^\gamma}$.
\end{lem}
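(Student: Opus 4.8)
The plan is to regard the rescaled clique count as a low-degree multilinear polynomial in the independent edge indicators and apply the Kim--Vu polynomial concentration inequality \cite{KimVu2000concentration,Tsourakakis2011sparsifier} in place of the read-$k$ bound of Proposition \ref{thm:readk}. Concretely, I would work with the degree $k=6$ polynomial $Y_{10} = \sum_{\boxtimes(a,b,c,d) \in \mathcal{H}_{10}} t_{ab}t_{bc}t_{cd}t_{da}t_{ac}t_{bd}$, which has nonnegative (indeed $0/1$) coefficients in the Bernoulli$(p)$ variables $t_e$. Kim--Vu controls the deviation of such a $Y$ by $a_k\sqrt{E(Y)E'(Y)}\,\lambda^k$ with failure probability at most $c_k e^{-\lambda + (k-1)\log m}$ for any $\lambda > 1$, where $a_k = 8^k\sqrt{k!}$ and $E(Y) = \max_{j\ge 0}E_j(Y)$, $E'(Y) = \max_{j\ge 1}E_j(Y)$ are built from the maximal expected partial derivatives $E_j(Y) = \max_{|A|=j}\mathbb{E}[\partial_A Y]$. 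The bulk of the work is to estimate these partial-derivative expectations for $Y_{10}$.

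The combinatorial core, and the step I expect to be the main obstacle, is bounding $E_j(Y_{10})$. Differentiating with respect to a set $A$ of $j$ edges annihilates every monomial ($4$-clique) that does not contain $A$ and strips the $j$ differentiated variables from the survivors, so $\mathbb{E}[\partial_A Y_{10}] = p^{6-j}\cdot\#\{4\text{-cliques of }G\text{ containing every edge of }A\}$. For $j=0$ this is $E_0 = p^6 N_{10}$. For $j\ge 1$ I would argue on how many vertices the $j$ edges of $A$ span: if they lie in a common $4$-clique at all they span at most $4$ vertices; when $A$ spans at most $3$ vertices the count is at most $k_{10}$ (any such clique contains a single edge of $A$, and by definition at most $k_{10}$ cliques share an edge), and once $A$ spans all $4$ vertices the count is at most $1$. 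Carrying the $p^{6-j}$ factors, the largest term over $j\ge1$ is $E'(Y_{10}) = \max\{p^3 k_{10},\,1\}$, while $E_0 = p^6 N_{10}$ dominates the unrestricted maximum, so $E(Y_{10}) = p^6 N_{10}$ in the regime of interest.

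Finally I would assemble the bound. Choosing $\lambda = \log(m^{5+\gamma})$ drives $c_k e^{-\lambda + 5\log m}$ below $m^{-\gamma}$ (the constant $c_k$ absorbed into the polylog), giving the stated confidence. To get relative error $\epsilon_{KV}$ on $X_{10}=Y_{10}/p^6$, I impose $a_6\sqrt{E(Y_{10})E'(Y_{10})}\,\lambda^6 \le \epsilon_{KV}\,p^6 N_{10}$ and solve for $p$ in the two cases of $E'$: when $p^3 k_{10}\le 1$ one has $\sqrt{EE'} = p^3\sqrt{N_{10}}$, giving a threshold that scales like $N_{10}^{-1/6}$, and when $p^3 k_{10}\ge 1$ one has $\sqrt{EE'} = p^{9/2}\sqrt{N_{10}k_{10}}$, giving a threshold that scales like $(k_{10}/N_{10})^{1/3}$; together these produce the factor $\max\{\sqrt[6]{1/N_{10}},\sqrt[3]{k_{10}/N_{10}}\}$ appearing in \eqref{eq:cliqueKVcond}. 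Since $a_6,\lambda\ge 1$ and $\epsilon_{KV}<1$, the two case-specific prefactors $(a_6\lambda^6/\epsilon_{KV})^{1/3}$ and $(a_6\lambda^6/\epsilon_{KV})^{2/3}$ are both dominated by the single clean expression $a_6^2\log^{12}(m^{5+\gamma})/\epsilon_{KV}^2$, so the displayed inequality is sufficient in both cases; rescaling $|Y_{10}-p^6 N_{10}|\le\epsilon_{KV}\,p^6 N_{10}$ back to $|N_{10}-X_{10}|\le\epsilon_{KV}N_{10}$ completes the argument.
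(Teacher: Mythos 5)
Your proposal is correct and follows essentially the same route as the paper's own proof: the same degree-$6$ polynomial $Y_{10}$, the same Kim--Vu proposition with $a_6 = 8^6\sqrt{6!}$, the same bound $\mathbb{E}_{\geq 1}[Y_{10}] \leq \max\{1, p^3 k_{10}\}$ from the expected partial derivatives, the choice $\lambda = \log(m^{5+\gamma})$, and the same two-case analysis yielding the $\max\left\{\sqrt[6]{1/N_{10}},\sqrt[3]{k_{10}/N_{10}}\right\}$ threshold. The only cosmetic difference is that you organize the derivative bounds by the vertex span of the differentiated edge set, whereas the paper enumerates the derivative types (edge, wedge, two disjoint edges, triangle, etc.) explicitly and bounds each by $k_{10}$ or $1$.
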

\begin{proof}
This proof is a straightforward application of the main result in \cite{KimVu2000concentration}, repeated below for completeness. 
Let $\vecalpha = (\alpha_1, \alpha_2, \ldots, \alpha_m) \in \mathbb{Z}_+^m$ and define $\mathbb{E}_{\geq 1}[X] = \max_{\vecalpha: \pnorm{\vecalpha}{1} \geq 1} \mathbb{E}(\partial^\vecalpha X)$, where
\begin{align}
\mathbb{E}(\partial^\vecalpha X) = \mathbb{E} \left[(\frac{\partial}{\partial t_1})^{\alpha_1} \ldots (\frac{\partial}{\partial t_m})^{\alpha_m} \left[ X(t_1,\ldots, t_m) \right] \right].
\end{align}
Further, we call a polynomial totally positive if the coefficients of all the monomials involved are non-negative.

\begin{prop}[Kim-Vu concentration of multivariate polynomials \cite{KimVu2000concentration}] \label{KIMVUCONCENTRATION}
Let $Y$ be a random totally positive Boolean polynomial in $m$ Boolean random variables with degree at most $k$. If $\mathbb{E}[Y] \geq \mathbb{E}_{\geq 1}[Y]$, then
\begin{align}
\mathbb{P}\left( |Y - \mathbb{E}[Y] | > a_k \sqrt{\mathbb{E}[Y] \mathbb{E}_{\geq 1}[Y]}  \lambda^k \right) = \mathcal{O}\left(\exp \left(-\lambda + (k-1) \log m \right) \right)  \label{eq:kvtheorem}
\end{align}
 for any $\lambda > 1$, where $a_k = 8^k k!^{1/2}$.
\end{prop}

Let $Y_{10} = \sum_{\boxtimes(a,b,c,d) \in {\cal H}_{10}} t_{ab}t_{bc} t_{cd} t_{da} t_{ac} t_{bd}$. Clearly $Y_{10}$ is totally positive. Let $k_{10,ab}$, $\sigma_{abc}$, and $\nu_{abc}$ be the maximum number of $4$-cliques sharing a common edge $t_{ab}$, wedge $\Lambda_{abc}$, and triangle $\Delta_{abc}$, respectively. Taking repeated partial derivatives,
\begin{align*}
    \mathbb{E} \left[ \frac{\partial Y_{10}}{\partial t_{ab}} \right] &= p^5 k_{10,ab} , \\
     \mathbb{E} \left[\frac {\partial Y_{10}}{\partial t_{ab} t_{bc}} \right]  &= p^4 \sigma_{abc} , \quad \mathbb{E} \left[\frac {\partial Y_{10}}{\partial t_{ab} t_{cd}} \right]  = p^4 , \\
     \mathbb{E} \left[\frac {\partial Y_{10}}{\partial t_{ab} t_{bc} t_{ac}} \right]  &= p^3 \nu_{abc} , \quad \mathbb{E} \left[\frac {\partial Y_{10}}{\partial t_{ab} t_{bc} t_{cd}} \right]  = p^3 , \\
     \mathbb{E} \left[\frac{\partial Y_{10}}{\partial  t_{ab} t_{bc} t_{ac} t_{da}} \right]  &=  \mathbb{E} \left[\frac{\partial Y_{10}}{\partial  t_{ab} t_{bc} t_{cd} t_{da}} \right] = p^2, \\
          \mathbb{E} \left[\frac{\partial Y_{10}}{\partial t_{ab}t_{bc} t_{cd} t_{da} t_{ac}} \right]  &= p , \quad
               \mathbb{E} \left[\frac{\partial Y_{10}}{\partial t_{ab}t_{bc} t_{cd} t_{da} t_{ac} t_{bd}} \right]  = 1 .
\end{align*}

Noting that $\sigma_{abc} \leq \min\{k_{10,ab}, k_{10,bc}\} \leq k_{10}$, similarly $\nu_{abc} \leq k_{10}$, and $p^5 \leq p^4 \ldots \leq 1$, we have $\mathbb{E}_{\geq 1} \left[ Y_1 \right] \leq \max \{1, p^3 k_{10}\}$.
  $\mathbb{E}_{\geq1}\left[Y_{10} \right] \leq \mathbb{E}[Y_{10}] = p^6N_{10}$ implies
\begin{align}
 p \geq \max \{\sqrt[6]{1/N_{10}},  \sqrt[3]{k_{10}/N_{10}} \} . \label{eq:kvcondition}
\end{align}

Choose $\epsilon_{KV} \geq 0$ and let $\epsilon_{KV} \mathbb{E}[Y_{10}] = a_6 \sqrt{\mathbb{E}[Y_{10}] \mathbb{E}_{\geq 1}[Y_{10}]}  \lambda^6$. Applying Proposition \ref{KIMVUCONCENTRATION} to $Y_{10}$ given \eqref{eq:cliqueKVcond} and \eqref{eq:kvcondition}, the right hand side of \eqref{eq:kvtheorem} is $\mathcal{O}(\exp(- \gamma \log m)) = \mathcal{O}(1/m^{\gamma})$. Therefore, the error of the $4$-clique estimator $X_{10}$ is
\begin{align*}
\delta X_{10} = \frac{1}{p^6} \delta Y_{10} = \frac{1}{p^6} (\epsilon_{KV} p^6 N_{10}) = \epsilon N_{10}
\end{align*}
with probability greater than $1 - \frac{1}{m^\gamma}$.
\end{proof}

Now we are ready to prove the corollary by comparing Theorem \ref{lem:cliqueRK} and Lemma \ref{lem:cliqueKV}. Fix $p,\delta,>0$ and $\gamma>1$ such that $p = \Omega(1/\log m)$ and $\delta = m^{-\gamma} = \Omega(1/m)$. Now we analyze the bounds $\epsilon_{KV}$ and $\epsilon_{RK}$.
For any graph and $a_6$ defined in Lemma \ref{lem:cliqueKV},
\begin{align}
\frac{1}{a_6^2} \leq 1, \quad \frac{\gamma}{(5+\gamma)^{12}} \leq 1, \quad \log(2^{1/\gamma}m) \leq 2 \log m, \quad \left(\frac{k_{10}}{N_{10}}\right)^{2/3} \leq 1. \label{eq:comparegeq0}
\end{align}
We further require $k_{10} \leq N_{10}^{5/6}$. Then the condition on $p$ with \eqref{eq:comparegeq0} implies
\begin{align*}
p^{11} \geq 1/\log^{11}(m) 
&\geq \frac{\gamma \log(2^{1/\gamma}m)}{2a_6^2(5+\gamma)^{12}\log^{12}(m)} \min \left\{ k_{10}/N_{10}^{5/6}, (k_{10}/N_{10})^{2/3} \right\}.
\end{align*}
Rearranging terms,
\begin{align*}
\frac{a_6^2\log^{12}(m^{5+\gamma}) \max \left\{ \sqrt[6]{1/N_{10}},\sqrt[3]{k_{10}/N_{10}}\right\}}{p} &\geq \frac{\log(2m^\gamma)k_{10}/N_{10}}{2 p^{12}} \Rightarrow \epsilon_{KV}^2 \geq \epsilon_{RK}^2.
\end{align*}
\end{proof}

We note that the asymptotic condition $p = \Omega(1/\log m)$, includes a constant much smaller than $1$. This is due to the looseness of inequalities in \eqref{eq:comparegeq0} and implies that Theorem \ref{lem:cliqueRK} is superior to Lemma \ref{lem:cliqueKV} over all $p$ values of practical interest.

\subsection{Implementation details}
To improve the practical performance of \prof\  (see Algorithm \ref{alg:4local} for pseudocode), we handle low and high degree vertices differently. As in GraphLab PowerGraph's standard triangle counting, cuckoo hash tables are used if the vertex degree is above a threshold. Now, we also threshold vertices to determine whether the $2$-hop histogram in Section \ref{subsec:nh} will be either a vector or an unordered map. This is because sorting and merging operations on a vector scale poorly with increasing degree size, while an unordered map has constant lookup time. We found that this approach successfully trades off processing time and memory consumption.

\subsection{Extension to global 4-profile sparsifier}

Another advantage to read-$k$ function families is that they are simpler to extend to more complex subgraphs. We now state concentration results for the full $4$-profile sparsifier evaluated experimentally in Section \ref{sec:experiments}

Using the notation in Section \ref{sec:spars}, the edge sampling matrix $\mathbf{H}$ is defined by the relations
\begin{gather*}
\begin{bmatrix}
\mathbb{E}[Y_0] \\ \vdots \\ \mathbb{E}[Y_{10}]
\end{bmatrix}
= \mathbf{H}
\begin{bmatrix}
N_0 \\  \vdots \\ N_{10}
\end{bmatrix} \quad \Rightarrow \quad
\begin{bmatrix}
X_0 \\ \vdots \\ X_{10}
\end{bmatrix}
= \mathbf{H}^{-1}
\begin{bmatrix}
Y_0 \\  \vdots \\ Y_{10}
\end{bmatrix},
\qquad \text{where} \\ \\ \\ \\ \\
\arraycolsep=2pt
\small
\mathbf{H} = 
\begin{bmatrix}
1 & 1-p & (1-p)^2 & (1-p)^2  & (1-p)^3  & (1-p)^3  & (1-p)^3  & (1-p)^4  & (1-p)^4  & (1-p)^5  & (1-p)^6 \\
0 & p & 2p(1-p) & 2p(1-p) & 3p(1-p)^2 & 3p(1-p)^2 & 3p(1-p)^2 & 4p(1-p)^3 & 4p(1-p)^3 & 5p(1-p)^4 & 6p(1-p)^5   \\
0 & 0 & p^2 & 0 & p^2(1-p) & 0 & 0 & 2p^2(1-p)^2 & p^2(1-p)^2 & 2p^2(1-p)^3 & 3p^2(1-p)^4 \\
0 & 0 & 0 &  p^2 & 2p^2(1-p) & 3p^2(1-p) & 3p^2(1-p) & 4p^2(1-p)^2 & 5p^2(1-p)^2 & 8p^2(1-p)^3 & 12p^2(1-p)^4 \\
0 & 0 & 0 & 0 & p^3 & 0 & 0 & 4p^3(1-p) & 2p^3(1-p) & 6p^3(1-p)^2 & 12p^3(1-p)^3 \\ 
0 & 0 & 0 & 0 & 0 & p^3 & 0 & 0 & p^3(1-p) & 2p^3(1-p)^2 & 4p^3(1-p)^3 \\
0 & 0 & 0 & 0 & 0 & 0 & p^3 & 0 & p^3(1-p) & 2p^3(1-p)^2 & 4p^3(1-p)^3 \\
0 & 0 & 0 & 0 & 0 & 0 & 0 & p^4 & 0 & p^4(1-p) & 3p^4(1-p)^2 \\
0 & 0 & 0 & 0 & 0 & 0 & 0 & 0 & p^4 & 4p^4(1-p) & 12p^4(1-p)^2 \\
0 & 0 & 0 & 0 & 0 & 0 & 0 & 0 & 0 & p^5 & 6p^5(1-p) \\
0 & 0 & 0 & 0 & 0 & 0 & 0 & 0 & 0 & 0 & p^6 \\
\end{bmatrix}.
\end{gather*}
\arraycolsep=5pt

Let $t=\frac{p-1}{p} $. Then the inverse sampling matrix is given by
\begin{align}
\mathbf{H}^{-1} = 
\begin{bmatrix}
1 & t & t^2 & t^2  & t^3  & t^3  & t^3  & t^4  & t^4  & t^5  & t^6 \\
0 & \frac{1}{p} & \frac{2t}{p} & \frac{2t}{p} & \frac{3t^2}{p} & \frac{3t^2}{p} & \frac{3t^2}{p} & \frac{4t^3}{p} & \frac{4t^3}{p} & \frac{5t^4}{p} & \frac{6t^5}{p}   \\
0 & 0 & \frac{1}{p^2} & 0 & \frac{t}{p^2} & 0 & 0 & \frac{2t^2}{p^2} & \frac{t^2}{p^2} & \frac{2t^3}{p^2} & \frac{3t^4}{p^2} \\
0 & 0 & 0 &  \frac{1}{p^2} & \frac{2t}{p^2} & \frac{3t}{p^2} & \frac{3t}{p^2} & \frac{4t^2}{p^2} & \frac{5t^2}{p^2} & \frac{8t^3}{p^2} & \frac{12t^4}{p^2} \\
0 & 0 & 0 & 0 & \frac{1}{p^3} & 0 & 0 & \frac{4t}{p^3} & \frac{2t}{p^3} & \frac{6t^2}{p^3} & \frac{12t^3}{p^3} \\ 
0 & 0 & 0 & 0 & 0 & \frac{1}{p^3} & 0 & 0 & \frac{t}{p^3} & \frac{2t^2}{p^3} & \frac{4t^3}{p^3} \\
0 & 0 & 0 & 0 & 0 & 0 & \frac{1}{p^3} & 0 & \frac{t}{p^3} & \frac{2t^2}{p^3} & \frac{4t^3}{p^3} \\
0 & 0 & 0 & 0 & 0 & 0 & 0 & \frac{1}{p^4} & 0 & \frac{t}{p^4} & \frac{3t^2}{p^4} \\
0 & 0 & 0 & 0 & 0 & 0 & 0 & 0 & \frac{1}{p^4} & \frac{4t}{p^4} & \frac{12t^2}{p^4} \\
0 & 0 & 0 & 0 & 0 & 0 & 0 & 0 & 0 & \frac{1}{p^5} & \frac{6t}{p^5} \\
0 & 0 & 0 & 0 & 0 & 0 & 0 & 0 & 0 & 0 & \frac{1}{p^6} \\
\end{bmatrix}.
\label{eq:channelinv4}
\end{align}
The binomial coefficients in these matrices influence our concentration bounds, which we now state:

\begin{theorem}[$4$-profile sparsifier estimators]
Consider the sampling process described above and in Section \ref{sec:spars}. Let $X_i,~0 \leq i \leq 10$ (and $\mathbf{X}$ be a vector of these estimates), be the actual estimates of $4$-profiles. Let $k_{i}$ be the maximum number of subgraphs $F_i$ sharing a common edge. Let $Y_i,~0 \leq i \leq 10$, be the $4$ profile counts of the sparsified graph. Then let $N_i,~0 \leq i \leq 10$, be the actual counts. Choose $0 < \delta < 1$ and $\epsilon > 0$.  
Let $C = (192)^2/2$. If 
\begin{align*}
p &\geq \left( \frac{C\log(2/\delta)k_{10}}{ \epsilon^2 N_{10}} \right)^{1/12}, \quad 
p \geq \left( \frac{C\log(2/\delta)(k_9 + 6k_{10})}{\epsilon^2 (N_9 + 6N_{10})} \right)^{1/10}, \quad 
p \geq \left( \frac{C\log(2/\delta)(k_8 + 4k_9 + 12k_{10})}{ \epsilon^2 (N_8 + 4N_9 + 12N_{10})} \right)^{1/8} \\
p &\geq \left( \frac{C\log(2/\delta)(k_7 + k_9 + 3k_{10})}{ \epsilon^2 (N_7 + N_9 + 3N_{10})} \right)^{1/8}, \quad 
p \geq \left( \frac{C\log(2/\delta)(k_6 + k_8 + 2k_9 + 4k_{10})}{ \epsilon^2 (N_6 + N_8 + 2N_9 + 4N_{10})} \right)^{1/6} \\
p &\geq \left( \frac{C\log(2/\delta)(k_5 + k_8 + 2k_9 + 4k_{10})}{\epsilon^2 (N_5 + N_8 + 2N_9 + 4N_{10})} \right)^{1/6}, \quad
p \geq \left( \frac{C\log(2/\delta)(k_4 + 4k_7 + 2k_8 + 6k_9 + 12k_{10})}{ \epsilon^2 (N_4 + 4N_7 + 2N_8 + 6N_9 + 12N_{10})} \right)^{1/6} \\
p &\geq \left( \frac{C\log(2/\delta)(k_3 + 2k_4 + 3k_5 + 3k_6 + 4k_7 + 5k_8 + 8k_9 + 12k_{10})}{ \epsilon^2 (N_3 + 2N_4 + 3N_5 + 3N_6 + 4N_7 + 5N_8 + 8N_9 + 12N_{10})} \right)^{1/4} \\
p &\geq \left( \frac{C\log(2/\delta)(k_2 + k_4 + 2k_7 + k_8 + 2k_9 + 3k_{10})}{ \epsilon^2 (N_2 + N_4 + 2N_7 + N_8 + 2N_9 + 3N_{10})} \right)^{1/4} \\
p &\geq \left( \frac{C\log(2/\delta)(k_1 + 2k_2 + 2k_3 + 3k_4 + 3k_5 + 3k_6 + 4k_7 + 4k_8 + 5k_9 + 6k_{10})}{ \epsilon^2 (N_1 + 2N_2 + 2N_3 + 3N_4 + 3N_5 + 3N_6 + 4N_7 + 4N_8 + 5N_9 + 6N_{10})} \right)^{1/2} \\
n_0 &\leq |V|^2 \left( |V|^2 - \frac{C\log(2/\delta)}{ \epsilon^2} \right),
\end{align*}
then $\pnorm{\delta \mathbf{X}}{\infty} \leq \epsilon \binom{|V|}{4}$ with probability at least $1 - \delta$. 
\end{theorem}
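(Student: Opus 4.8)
The plan is to generalize the single-coordinate argument of Theorem \ref{lem:cliqueRK} to all eleven estimators simultaneously by replacing the induced counts $Y_i$ with monotone \emph{non-induced} subgraph counts, to which Proposition \ref{thm:readk} applies directly. For each pattern $F_i$ let $\tilde M_i$ denote the number of (not-necessarily-induced) copies of $F_i$ surviving in $\tilde G$. A fixed copy of $F_i$ uses exactly $e_i$ edges ($e_{10}=6,\,e_9=5,\,e_8=e_7=4,\,e_6=e_5=e_4=3,\,e_3=e_2=2,\,e_1=1,\,e_0=0$) and survives precisely when all $e_i$ of its edge variables equal $1$, so $\tilde M_i = \sum_{\text{copies}} \prod_e t_e$ is a sum of monotone Boolean indicators. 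Its number of terms is the deterministic count $M_i = \sum_j c_{ij} N_j$, where $c_{ij}$ is the number of non-induced copies of $F_i$ inside $F_j$. These $c_{ij}$ are exactly the integer coefficients in the statement (for instance $M_9 = N_9 + 6N_{10}$, $M_8 = N_8 + 4N_9 + 12N_{10}$, $M_7 = N_7 + N_9 + 3N_{10}$), and since every copy through a given edge sits inside some induced $F_j$, the read parameter of the family is at most $\sum_j c_{ij} k_j$, which is the matching coefficient combination in each denominator.

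Next I would apply Proposition \ref{thm:readk} to each family. Every indicator has success probability $p^{e_i}$, hence $\mathbb{E}[\tilde M_i] = p^{e_i} M_i$, and the two-sided read-$k$ tail gives $\mathbb{P}(|\tilde M_i - p^{e_i}M_i| \ge p^{e_i}\epsilon_i M_i) \le 2\exp(-2 p^{2e_i}\epsilon_i^2 M_i / \sum_j c_{ij} k_j)$, exactly as in the clique case but with $p^{2e_i}$ in place of $p^{12}$. Passing to the unbiased non-induced estimate $\hat M_i = \tilde M_i / p^{e_i}$, forcing this tail below $\delta$ (after a union bound over the patterns), and solving for $p$ reproduces the listed conditions; the $2e_i$-th roots explain the exponents $\tfrac1{12},\tfrac1{10},\tfrac1{8},\tfrac1{8},\tfrac1{6},\tfrac1{6},\tfrac1{6},\tfrac1{4},\tfrac1{4},\tfrac1{2}$.

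Finally I would transfer concentration of the $\hat M_i$ to the induced estimates $X_i$. Writing $\mathbf M = \mathbf C \mathbf N$ and $\tilde{\mathbf M} = \mathbf C \mathbf Y$ for the upper-triangular, unit-diagonal matrix $\mathbf C = (c_{ij})$, one checks $\mathbf C \mathbf H = \mathbf D \mathbf C$ with $\mathbf D = \mathrm{diag}(p^{e_i})$, so $\mathbf H = \mathbf C^{-1}\mathbf D \mathbf C$ and the estimator of \eqref{eq:channelinv4} satisfies $\mathbf X = \mathbf H^{-1}\mathbf Y = \mathbf C^{-1}\hat{\mathbf M}$, whence $\mathbf X - \mathbf N = \mathbf C^{-1}(\hat{\mathbf M} - \mathbf M)$. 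Bounding $|X_i - N_i| \le \sum_j |(\mathbf C^{-1})_{ij}|\,|\hat M_j - M_j|$ together with the crude bound $M_j \le 12\binom{|V|}{4}$ (no pattern occurs more than $12$ times inside a $K_4$, hence inside any induced $4$-subgraph) converts the relative errors of the previous step into the uniform absolute bound $\pnorm{\delta \mathbf X}{\infty} \le \epsilon\binom{|V|}{4}$. The universal constant factors as $192 = 16 \times 12$, the product of the maximum absolute row-sum of $\mathbf C^{-1}$ and this multiplicity bound, and $C = 192^2/2$ arises because the relative error enters the read-$k$ exponent squared (the factor $2$ coming from the $2\epsilon^2$ in Proposition \ref{thm:readk}). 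The empty pattern $F_0$ is degenerate: since $e_0 = 0$ its non-induced count $M_0 = \binom{|V|}{4}$ is deterministic, so $X_0 - N_0$ inherits its deviation only through the lower rows of $\mathbf C^{-1}$, which is why $F_0$ produces the separate condition on $n_0$ rather than a read-$k$ tail.

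I expect the main obstacle to be the transfer step: pinning down the entries of $\mathbf C^{-1}$ and the correct per-$4$-subset multiplicities so that all relative errors collapse into a single honest absolute constant, while ensuring the eleven-way union bound does not spoil the clean $\log(2/\delta)$ dependence. Verifying the identities $M_i = \sum_j c_{ij} N_j$ for every $i$ (e.g.\ that each $K_4$ contributes $12$ non-induced paws, $3$ four-cycles, and $12$ three-paths, and each diamond $4$ paws and one four-cycle) is routine but delicate, since a miscount in any $c_{ij}$ corrupts both the numerator and the denominator of the corresponding condition.
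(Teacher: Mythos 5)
Your strategy is, at its core, the same as the paper's: apply Proposition \ref{thm:readk} once per pattern, with term count $r_i=\sum_j c_{ij}N_j$ and read parameter $k\le\sum_j c_{ij}k_j$, then push the eleven deviations through the inverse of the sampling system. The paper does this with the induced counts $Y_i$ of $\tilde G$ themselves (decomposing each $Y_i$ over pairs of a $4$-subset and a surviving spanning copy, which yields exactly your parameters $r_{Y_i}=M_i$ and $k_{Y_i}=\sum_j c_{ij}k_j$) and propagates errors through $\mathbf{H}^{-1}$; you diagonalize first via monotone non-induced counts and propagate through $\mathbf{C}^{-1}$. Since $\mathbf{C}\mathbf{H}=\mathbf{D}\mathbf{C}$ gives $\mathbf{H}^{-1}=\mathbf{C}^{-1}\mathbf{D}^{-1}\mathbf{C}$, the two propagations are numerically identical, and your conditions on $p$ (the $1/(2e_i)$-th roots) coincide with the paper's. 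Your packaging is arguably cleaner, and you are actually more careful than the paper in flagging the eleven-way union bound, which the paper silently ignores.

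There is, however, a genuine flaw in your final step: the constant accounting is wrong, so as written you do not reach the stated $C=(192)^2/2$. The factorization ``$192=16\times 12$'' has no basis. By M\"obius inversion $|(\mathbf{C}^{-1})_{ij}|=c_{ij}$, so the maximum absolute row sum of $\mathbf{C}^{-1}$ equals the maximum row sum of $\mathbf{C}$, which is $38$ (the $F_3$ row: $1+2+3+3+4+5+8+12$), not $16$. Your proposed bound, row sum times the crude estimate $M_j\le 12\binom{|V|}{4}$, therefore gives
\begin{align*}
|X_i-N_i| \;\le\; \sum_j c_{ij}\,\epsilon' M_j \;\le\; 38\cdot 12\,\epsilon'\binom{|V|}{4} \;=\; 456\,\epsilon'\binom{|V|}{4},
\end{align*}
which would force $C=(456)^2/2$, a strictly stronger hypothesis on $p$ than the theorem asserts. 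The fix stays inside your framework but needs the tighter accounting the paper effectively performs: expand $\sum_j c_{ij}M_j=\sum_l (\mathbf{C}^2)_{il}N_l$, use $\sum_l N_l=\binom{|V|}{4}$, and verify $\max_{i,l}(\mathbf{C}^2)_{il}=192$ (attained at $(F_1,F_{10})$ and $(F_3,F_{10})$); this $192$ is exactly the coefficient of $N_{10}$ in the paper's expansions of $\delta X_1$ and $\delta X_3$. Separately, your rationale for the $n_0$ hypothesis is incorrect, though harmlessly so: in your route $M_0=\binom{|V|}{4}$ is deterministic, so $\hat M_0-M_0=0$ and $\delta X_0$ is already controlled by the other ten tails; no condition on $n_0$ arises for you at all (it is simply an unused hypothesis). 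It appears in the paper because the paper concentrates $Y_0$ directly, whose read parameter is of order $|V|^2$.
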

\begin{proof}

We apply Proposition \ref{thm:readk} a total of $11$ times to the sampling-estimator system defined above by $\mathbf{H}$ and $\mathbf{H}^{-1}$. In our context, each sampled subgraph count $Y_i$ is a sum of functions in a  read-$k_{Y_i}$ family, where $k_{Y_i} \leq \min\{|V|-2,N_i\}$. Let $k_{i,e}$ be the maximum number of subgraphs $F_i$ sharing a common edge $e$, and let $k_i= \max_e k_{i,e}$, for $i~=~0,\ldots,10$. The $Y_i$'s have the following parameters:
\begin{align}
\begin{split}
r_{Y_0} &= \binom{|V|}{4}, \quad k_{Y_0} = |V| \\
r_{Y_1} &= N_1 + 2N_2 + 2N_3 + 3N_4 + 3N_5 + 3N_6 + 4N_7 + 4N_8 + 5N_9 + 6N_{10} \\
k_{Y_1} &= k_1 + 2k_2 + 2k_3 + 3k_4 + 3k_5 + 3k_6 + 4k_7 + 4k_8 + 5k_9 + 6k_{10} \\
r_{Y_2} &= N_2 + N_4 + 2N_7 + N_8 + 2N_9 + 3N_{10} \\
k_{Y_2} &= k_2 + k_4 + 2k_7 + k_8 + 2k_9 + 3k_{10} \\
r_{Y_3} &= N_3 + 2N_4 + 3N_5 + 3N_6 + 4N_7 + 5N_8 + 8N_9 + 12N_{10} \\
k_{Y_3} &= k_3 + 2k_4 + 3k_5 + 3k_6 + 4k_7 + 5k_8 + 8k_9 + 12k_{10}\\
r_{Y_4} &= N_4 + 4N_7 + 2N_8 + 6N_9 + 12N_{10}, \quad k_{Y_4} = k_4 + 4k_7 + 2k_8 + 6k_9 + 12k_{10}\\
r_{Y_5} &= N_5 + N_8 + 2N_9 + 4N_{10}, \quad k_{Y_5} = k_5 + k_8 + 2k_9 + 4k_{10} \\
r_{Y_6} &= N_6 + N_8 + 2N_9 + 4N_{10}, \quad k_{Y_6} = k_6 + k_8 + 2k_9 + 4k_{10}\\
r_{Y_7} &= N_7 + N_9 + 3N_{10}, \quad k_{Y_7} = k_7 + k_9 + 3k_{10} \\
r_{Y_8} &= N_8 + 4N_9 + 12N_{10}, \quad k_{Y_8} = k_8 + 4k_9 + 12k_{10} \\
r_{Y_9} &= N_9 + 6N_{10}, \quad k_{Y_9} = k_9 + 6k_{10} \\
r_{Y_{10}} &= N_{10} , \quad k_{Y_{10}} = k_{10}
\end{split}
\end{align}

We show the application of Proposition \ref{thm:readk} for $Y_7$ through $Y_9$ because $Y_{10}$ was shown in the proof of Theorem \ref{lem:cliqueRK} and the other cases are similar:
\begin{align*}
\mathbb{P} \left( \lvert Y_{7} - (p^4N_7 + p^4(1-p) N_9 + 3p^4(1-p)^2N_{10}) \rvert \geq p^4\epsilon (N_7 + N_9 + 3N_{10}) \right) &\leq 2 \exp \left( -\frac{2 p^8 \epsilon^2 (N_7 + N_9 + 3N_{10})}{k_7 + k_9 + 3k_{10}} \right) \\
\mathbb{P} \left( \lvert Y_{8} - (p^4N_8 + 4p^4(1-p) N_9 + 12p^4(1-p)^2N_{10}) \rvert \geq p^4\epsilon (N_8 + 4N_9 + 12N_{10}) \right) &\leq 2 \exp \left( -\frac{2 p^8 \epsilon^2 (N_8 + 4N_9 + 12N_{10})}{k_8 + 4k_9 + 12k_{10}} \right) \\
\mathbb{P} \left( \lvert Y_{9} - (p^5 N_9 + 6p^5(1-p)N_{10}) \rvert \geq \epsilon (N_9 + 6N_{10}) \right) &\leq 2 \exp \left( -\frac{2 \epsilon^2 (N_9 + 6N_{10})}{k_9 + 6k_{10}} \right) \\
\Rightarrow \mathbb{P} \left( \lvert \frac{1}{p^5}Y_9 - (N_{9} + 6(1-p)N_{10} )\rvert \geq \epsilon (N_9 + 6N_{10}) \right) &\leq 2 \exp \left( - \frac{2 p^{10} \epsilon^2 (N_9 + 6N_{10})}{k_9 + 6k_{10}} \right) \\
\mathbb{P} \left( \lvert Y_{10} - p^6N_{10} \rvert \geq \epsilon N_{10} \right) &\leq 2 \exp \left( -\frac{2 \epsilon^2 N_{10}}{k_{10}} \right) \\
\Rightarrow \mathbb{P} \left( \lvert X_{10} - N_{10} \rvert \geq \epsilon N_{10} \right) = \mathbb{P} \left( \lvert Y_3 - p^6N_{10} \rvert \geq p^6 \epsilon N_{10} \right) &\leq 2 \exp \left( - \frac{2 p^{12} \epsilon^2 N_{10}}{k_{10}} \right)
\end{align*}

Rearranging to solve for $p$, we have
\begin{align}
\begin{split}
p \geq \left( \frac{\log(2/\delta)k_{10}}{2 \epsilon^2 N_{10}} \right)^{1/12}, \quad 
p \geq \left( \frac{\log(2/\delta)(k_9 + 6k_{10})}{2 \epsilon^2 (N_9 + 6N_{10})} \right)^{1/10}, \quad 
p \geq \left( \frac{\log(2/\delta)(k_8 + 4k_9 + 12k_{10})}{2 \epsilon^2 (N_8 + 4N_9 + 12N_{10})} \right)^{1/8} \\
p \geq \left( \frac{\log(2/\delta)(k_7 + k_9 + 3k_{10})}{2 \epsilon^2 (N_7 + N_9 + 3N_{10})} \right)^{1/8}, \quad 
p \geq \left( \frac{\log(2/\delta)(k_6 + k_7 + 2k_9 + 4k_{10})}{2 \epsilon^2 (N_6 + N_8 + 2N_9 + 4N_{10})} \right)^{1/6} \\
p \geq \left( \frac{\log(2/\delta)(k_5 + k_7 + 2k_9 + 4k_{10})}{2 \epsilon^2 (N_5 + N_8 + 2N_9 + 4N_{10})} \right)^{1/6}, \quad
p \geq \left( \frac{\log(2/\delta)(k_4 + 4k_7 + 2k_8 + 6k_9 + 12k_{10})}{2 \epsilon^2 (N_4 + 4N_7 + 2N_8 + 6N_9 + 12N_{10})} \right)^{1/6} \\
p \geq \left( \frac{\log(2/\delta)(k_3 + 2k_4 + 3k_5 + 3k_6 + 4k_7 + 5k_8 + 8k_9 + 12k_{10})}{2 \epsilon^2 (N_3 + 2N_4 + 3N_5 + 3N_6 + 4N_7 + 5N_8 + 8N_9 + 12N_{10})} \right)^{1/4} \\
p \geq \left( \frac{\log(2/\delta)(k_2 + k_4 + 2k_7 + k_8 + 2k_9 + 3k_{10})}{2 \epsilon^2 (N_2 + N_4 + 2N_7 + N_8 + 2N_9 + 3N_{10})} \right)^{1/4} \\
p \geq \left( \frac{\log(2/\delta)(k_1 + 2k_2 + 2k_3 + 3k_4 + 3k_5 + 3k_6 + 4k_7 + 4k_8 + 5k_9 + 6k_{10})}{2 \epsilon^2 (N_1 + 2N_2 + 2N_3 + 3N_4 + 3N_5 + 3N_6 + 4N_7 + 4N_8 + 5N_9 + 6N_{10})} \right)^{1/2}
\end{split}
\end{align}

The final condition comes from the result for $Y_0$:
\begin{align}
n_0 &\leq \binom{|V|}{4} - \frac{\log(2/\delta) |V|^2}{2 \epsilon^2} \leq |V|^2 \left( |V|^2 - \frac{\log(2/\delta)}{2 \epsilon^2} \right)
\end{align}

Plugging into our estimators (given by $\mathbf{H}^{-1}$), we get the following error bounds:
\begin{align*}
\delta X_0 &\leq \epsilon(n_1 + n_2 + n_3) + \epsilon (n_1 + 2n_2 + 3n_3 + n_2 + 3n_3 + n_3) \\
& \leq \epsilon (2n_1 + 4n_2 + 8n_3) \leq 8\epsilon \binom{|V|}{3} \\
\delta X_1  &\leq \epsilon(N_1 + 2N_2 + 2N_3 + 3N_4 + 3N_5 + 3N_6 + 4N_7 + 4N_8 + 5N_9 + 6N_{10}) + 2\epsilon(N_2 + N_4 + 2N_7 + N_8 + 2N_9 + 3N_{10}) \\ 
& \quad + 2\epsilon (N_3 + 2N_4 + 3N_5 + 3N_6 + 4N_7 + 5N_8 + 8N_9 + 12N_{10}) + 3 \epsilon (N_4 + 4N_7 + 2N_8 + 6N_9 + 12N_{10}) \\
& \quad + 3 \epsilon (N_5 + N_8 + 2N_9 + 4N_{10}) + 3\epsilon (N_6 + N_8 + 2N_9 + 4N_{10}) + 4\epsilon (N_7 + N_9 + 3N_{10}) \\
& \quad + 4 \epsilon (N_8 + 4N_9 + 12N_{10}) + 5 \epsilon (N_9 + 6N_{10}) + 6\epsilon (N_{10}) \\
& \leq \epsilon (N_1 +  \ldots  + 192N_{10}) \leq 192\epsilon \binom{|V|}{4} \\
\delta X_2  &\leq \epsilon(N_2 + N_4 + 2N_7 + N_8 + 2N_9 + 3N_{10}) + \epsilon (N_4 + 4N_7 + 2N_8 + 6N_9 + 12N_{10}) +2 \epsilon (N_7 + N_9 + 3N_{10}) \\
& \quad + \epsilon (N_8 + 4N_9 + 12N_{10}) + 2 \epsilon (N_9 + 6N_{10}) + 3\epsilon (N_{10}) \\
& \leq \epsilon (N_2 + \ldots + 48N_{10}) \leq 48\epsilon \binom{|V|}{4} \\
\delta X_3  &\leq \epsilon (N_3 + 2N_4 + 3N_5 + 3N_6 + 4N_7 + 5N_8 + 8N_9 + 12N_{10}) + 2 \epsilon (N_4 + 4N_7 + 2N_8 + 6N_9 + 12N_{10}) \\
& \quad + 3 \epsilon (N_5 + N_8 + 2N_9 + 4N_{10}) + 3\epsilon (N_6 + N_8 + 2N_9 + 4N_{10}) + 4\epsilon (N_7 + N_9 + 3N_{10}) + 5 \epsilon (N_8 + 4N_9 + 12N_{10}) \\
& \quad + 8 \epsilon (N_9 + 6N_{10}) + 12\epsilon (N_{10}) \\
& \leq \epsilon (N_3 + 4N_4 + 6N_5 + \ldots + 192N_{10}) \leq 192\epsilon \binom{|V|}{4} \\
\delta X_4  &\leq \epsilon (N_4 + 4N_7 + 2N_8 + 6N_9 + 12N_{10}) + 4\epsilon (N_7 + N_9 + 3N_{10}) + 2 \epsilon (N_8 + 4N_9 + 12N_{10}) + 6 \epsilon (N_9 + 6N_{10}) + 12 \epsilon (N_{10}) \\ 
& \leq \epsilon (N_4 + \ldots + 96N_{10}) \leq 96\epsilon \binom{|V|}{4} \\
\delta X_5  &\leq \epsilon (N_5 + N_8 + 2N_9 + 4N_{10}) + \epsilon (N_8 + 4N_9 + 12N_{10}) + 2\epsilon(N_9 + 6N_{10}) + 4\epsilon (N_{10}) \\
& \leq \epsilon (N_5 + \ldots + 32N_{10}) \leq 32\epsilon \binom{|V|}{4} \\
\delta X_6  &\leq \epsilon (N_6 + N_8 + 2N_9 + 4N_{10}) + \epsilon (N_8 + 4N_9 + 12N_{10}) + 2\epsilon(N_9 + 6N_{10}) + 4\epsilon (N_{10}) \\
& \leq \epsilon (N_6 + \ldots + 32N_{10}) \leq 32\epsilon \binom{|V|}{4} \\
\delta X_7  &\leq \epsilon (N_7 + N_9 + 3N_{10}) + \epsilon(N_9 + 6N_{10}) + 3\epsilon (N_{10}) \\
& \leq \epsilon (N_7 + 2N_9 + 12N_{10}) \leq 12\epsilon \binom{|V|}{4} \\
\delta X_8  &\leq \epsilon (N_8 + 4N_9 + 12N_{10}) + 4 \epsilon(N_9 + 6N_{10}) + 12\epsilon (N_{10}) \\
& \leq \epsilon (N_8 + 8N_9 + 48N_{10}) \leq 48\epsilon \binom{|V|}{4} \\
\delta X_9  &\leq \epsilon (N_9 + 6N_{10}) + 6\epsilon (N_{10}) \\
& \leq \epsilon (N_9 + 12N_{10}) \leq 12\epsilon \binom{|V|}{4} \\
\delta X_{10} &\leq \epsilon N_{10}.
\end{align*}

Thus the maximum deviation in any estimator is less than $192\epsilon \binom{|V|}{4}$. 
Substituting $\tilde{\epsilon}^2 = \epsilon^2/(192)^2 = \epsilon^2/2C$ completes the proof.

\end{proof}

\end{document}